\newtheorem{theorem}{Theorem}
\newtheorem{lemma}{Lemma}
\newtheorem{proposition}{Proposition}
\theoremstyle{definition}
\newtheorem{definition}{Definition}
\newcommand{\beq}{\begin{equation}}
\newcommand{\M}{{\bf M}}
\newcommand{\B}{{\bf B}}
\newcommand{\N}{{\bf N}}
\newcommand{\sgn}{\mbox{sgn}}
\newcommand{\eeq}{\end{equation}}
\newcommand{\R}{{\mathbb R}}
\newcommand{\pd}{\partial}
\begin{document}
\begin{frontmatter}
\title{Block Regularization of the Kepler Problem on Surfaces of Revolution with  Positive Constant Curvature.}

\author{Manuele Santoprete}
\address{Department of Mathematics\\
Wilfrid Laurier University\\
75 University Avenue West,\\
Waterloo, ON, Canada, N2L 3C5.}
\ead{msantopr@wlu.ca}

\begin{abstract}
We consider the Kepler problem on  surfaces of revolution that are  homeomorphic to $S^2$ and have constant Gaussian curvature. We show that the system is maximally superintegrable, finding constants of motion that generalize the Runge-Lentz vector.  Then, using such first integrals, we determine the class of surfaces  that lead  to block-regularizable collision singularities. In particular we show that the singularities are always regularizable if the surfaces are spherical orbifolds of revolution with constant curvature.   
\end{abstract}

\begin{keyword}
Block Regularization \sep Kepler problem \sep surfaces of constant curvature
\end{keyword}
\end{frontmatter}

\section{Introduction}

The problem of describing the motion of a particle on surfaces of constant curvature, under the influence of the analogue of the gravitational potential, is an interesting problem that dates back to the 19th century. A short history of the problem is presented in  \cite{Santoprete}.  Many of the classical results have been long forgotten. However, since then, interest in the problem has been revived, at least in part, because of cosmological models as the one based on the  Robertson-Walker  metric. This model   describes a homogeneous, isotropic expanding or contracting universe, and its spatial slices are, depending on the curvature, three-spheres $S^3$, copies of the Euclidean space $\R^3$, or copies of the hyperbolic space $H^3$. 
In recent years many authors have studied the classical Kepler problem and the quantum analogue (the hydrogen atom), rediscovering the old results and introducing new ones (the interested reader can consult \cite{Carinena} for some interesting results and an extensive bibliography) 

In this paper we study the Kepler problem on surfaces  of revolution of constant Gaussian curvature with certain type of metric singularities. This  problem is more general than most of the previous work that concentrates on constant curvature spaces with no singularities (i.e. the sphere $S^2$ and the hyperbolic plane $H^2$, or in higher dimensions the three sphere $S^3$ and the hyperbolic space $H^3$).

As in the standard  Kepler problem the potential is singular and this introduces singularities in the equations and  in the solutions. 
In our case we will consider spherical surfaces of revolution that have constant curvature, i.e. surfaces of revolution that are homeomorphic to the two-sphere $S^2$
and that can have metric singularities at the ``north'' and ``south'' poles.
 
Levi-Civita has given us a method for extending orbits of the Kepler problem through double collisions \cite{Levi-Civita}. He eliminated the singularities in the vector field by transforming the equations to ones without singularities. The extension through double collisions then is given by the transformed equations and is automatically a smooth function of initial data. In \cite{Santander} this method is applied to the case of the Kepler problem on spaces of constant curvature (with no metric singularities).

Another  method of regularization  is the topological regularization due to Easton \cite{Easton}, that is based on the general theory developed by Conley and Easton \cite{ConleyEaston} and it is usually called regularization by surgery, or (following McGehee \cite{McGehee81}) block regularization. This method is used  in this paper to regularize the singularities at the poles. 
 Roughly, the idea is to excise a neighborhood of the singularity (more precisely an isolating block) from the manifold on which the vector field is defined and then to identify appropriate points on the boundary of the region.  
This is done constructing a map across the block that identifies the point where a solution enters the block with the point where  a solutions exits the block. This map is a diffeomorphism. However there are solutions that, once they enter the isolating block, never leave it. If the summentioned map can be extended (in a differentiable way) to consider such solutions then one says that the singularity is block regularizable. 

We characterize, as in \cite{Santoprete}, the spherical surfaces of revolution using a parameter $\beta$. In \cite{Santoprete} we showed that whenever $\beta$ is a rational number the Kepler system has the Bertrand property (i.e. all the bounded non-singular orbits are closed)
Here we find that, while the south pole singularity is always block regularizable, 
only few  values of $\beta$ produce a singularity at the north pole that can be regularized according to Easton. Furthermore, it turns out that the north pole singularity is always block-regularizable   in the case of closed (compact, without boundary) surfaces of revolution (with constant curvature) that are Riemannian 2-orbifolds.

 Loosely speaking, a 2-orbifold is modeled locally by convex Riemannian surfaces modulo finite groups of isometries acting with possible fixed points. This means that a neighborhood of each point $p$ of such orbifold is isometric to a Riemannian quotient $U_p/\Gamma_p$ where $U_p$ is a convex Riemannian surface diffeomorphic to $\R^2$, and $\Gamma_p$ is a finite group of isometries acting effectively on $U_p$. Every Riemannian surface is trivially an orbifold, with each $\Gamma_p$ being the trivial group. The reader interested in more background on orbifolds can consult \cite{Thurston}. For the purpose of this paper, however, we only need to apply a simple explicit criterion to determine whether a closed surface of revolution is a 2-orbifold (see section 10 and \cite{Borzellino}). 

The paper is organized as follows. In Section 2 we introduce the  the generalized Kepler potential and the equations of motion. In Section 3 we introduce Gaussian curvature and some properties of constant curvature surfaces. In the following section we find two additional  integrals of motion, besides the Hamiltonian and the angular momentum. In Section 5 we find the equations of the trajectory. In Section 6 we use a transformation to rewrite the equations in a more convenient way and we define the collision manifold. In the  following section we study the flow on the collision manifold. In Section 8 find  that for only few values of  $\beta$ there  is a trivializable  isolating block about collisons.
In Section 9 we determine for which surfaces the singularities are regularizable. 
In the last section we show that the singularities are always block regularizable if the surface under consideration is an orbifold of revolution.

\section{Equations of Motion}

Let $I$ be an interval of real numbers then we say that  $\gamma:I\rightarrow \R^2$ is a regular plane curve if $\gamma$ is $C^1$  and
$\gamma'(x)\neq 0$ for any $x\in I$.

\begin{definition}Let $\gamma: I\rightarrow \R^3$ be a simple (no self intersections) regular plane curve $\gamma(r)=(f(r),0,g(r))$ on the $xz$-plane where $f$ and $g$ are smooth curves on the interval $I=[r_N,r_S]$, with $f(r)\geq 0$  and $f(r)=0$ if and only if $r=r_N$ or $r=r_S$. 
A spherical surface of revolution $S$ is a surface isometrically  embedded in $\R^3$ that admits a parametrization ${\bf x}:I\times \R\rightarrow S$ of the form
\beq{\bf x}(r,\theta)=(f(r)\cos\theta,f(r)\sin\theta,g(r))
\label{eqparam}\eeq
That is,  $S$ is the surface of revolution obtained by rotating $\gamma$ about the $z-$axis. The curve $\gamma$ will be called the profile curve. 
\end{definition}

Note that a spherical surface of revolution is homeomorphic to $S^2$ and that by definition the sets ${\bf x }(r_N,\theta)$ and ${\bf x}(r_S,\theta)$ reduce to  single points, i.e. the {\it north} and the {\it south poles} of $S$.  
Metric singularities can only  occur at the north and south poles, S is smooth everywhere else (see figure \ref{surface}).

\begin{figure}[t]
  \begin{center}
      {\resizebox{!}{5.5cm}{\includegraphics{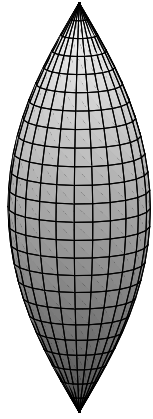}}  }
    \caption{A spherical surface of revolution with positive constant curvature\label{surface}} 
  \end{center}

\end{figure}

Throughout this paper all surfaces of revolution will be assumed to be  as in Definition 1  and the profile curve $\gamma$ is assumed to be unit speed, i.e. $ (\frac{df}{dr})^2+ (\frac{dg}{dr})^2=1$.

For a surface of revolution $S$, a simple computation gives the coefficients of the first fundamental form, or metric tensor (subscripts denote partial derivatives): 
\[E={\bf x}_r \cdot {\bf x}_r=\left(\frac{df}{dr}\right)^2+ \left(\frac{dg}{dr}\right)^2=1, \quad F={\bf x}_r \cdot {\bf x}_\theta=0 \quad G={\bf x}_\theta \cdot {\bf x}_\theta=f(r)^2,\]
 so that the metric (away from any singular point) is 
\beq
ds^2= E~dr^2+2F~dr~d\theta+G~d\theta^2=  dr^2+f(r)^2d\theta^2.
\label{eqds}\eeq
Note that the parametrization is orthogonal ($F=0$) and that $E_\theta=G_\theta=0$.
Surfaces given by parametrizations with these properties are said to be {\it $r$-Clairaut}.
The Lagrangian function of a particle of mass $m$ moving on the surface takes the form
\[
L=\frac m 2 ( \dot r^2+f(r)^2\dot \theta^2) -V(r)
\]
where $V(r)$ is the {\it generalized gravitational potential}, that is  
\beq
V(r)=\gamma \Theta(r)
\label{eqV}
\eeq
where $\gamma$ is a positive constant and $\Theta(r)$ is an antiderivative of $1/f(r)^2$.  The generalized gravitational potential is a solution of  the Laplace-Beltrami equation 
\beq
\triangle V(r)=\frac{1}{f(r)^2}\frac{\partial}{\partial r}\left(f(r)^2\frac{\partial V(r)}{\partial r}\right)=0.
\label{eqlaplace}\eeq
The Hamiltonian is 
\[
H=\frac{p_r^2}{2m}+\frac{p_\theta^2}{2m f(r)^2}+V(r)
\]
where $p_\theta=mf(r)^2\dot\theta$.

The equations of motion are:
\beq\left\{\begin{array}{l}
\dot r=\frac{\partial H}{\partial p_r}=\frac{p_r}{m}\\
\dot\theta=\frac{\partial H}{\partial p_\theta}=\frac{p_\theta}{m f(r)^2}\\
\dot p_r=-\frac{\partial H}{\partial r}=\frac{p_\theta^2f'(r)}{mf(r)^3}    -\frac{dV}{dr} \\
\dot p_\theta=-\frac{\partial H}{\partial \theta}=0
\end{array}\right.
\label{eqmotion}
\eeq
Clearly $H$ and $p_\theta$ are constant of motions, they are in involution and the problem is integrable by  the Liouville-Arnold theorem.

\section{Curvature}

It is well known that the (Gaussian) curvature of the metric (\ref{eqds})
is given by the equation
\[
K=-\frac{f''(r)}{f(r)}\
\]
Throughout this paper all the metrics will be assumed to be of  constant curvature $K$.
The requirement of constant curvature gives us a linear differential equation to solve
\[
f''=-K f.
\]
The solutions to this differential equation are of the form
\[
f(r)=A_0 e^{i\sqrt{K}r}+B_0e^{-i\sqrt{K}r}
\]
if $K\neq 0$
and 
\[
f(r)=Cr+D
\]
if $K=0$. 

Since we are interested only in spherical surfaces of revolution we can restrict our attentionto the case of positive curvature, i.e. $K>0$.
In this case the solutions take the form
\[
f(r)=A\sin(\sqrt{K}r)+B\cos(\sqrt{K}r)\quad \mbox{or} \quad f(r)=L\sin(\sqrt{K}r+\alpha)
\]
where $A=L\cos\alpha$ and $B=L\sin\alpha$ with $L>0$.


Now we can prove few useful facts and formulas.

\begin{proposition}
The equation 
\beq-ff''+(f')^2=\beta^2
\label{eqf1}\eeq 
is verified if and only if the metric  has constant Gaussian curvature $K$
and either $f(r)=A_0 e^{i\sqrt{K}r}+B_0e^{-i\sqrt{K}r}$ with $A_0B_0=\beta^2/4K$ or 
$f(r)=Cr+D$ with   $C=\pm \beta$. Moreover if the curvature $K$ is positive, then $A^2+B^2=\beta^2/K$.
\label{propf}
\end{proposition}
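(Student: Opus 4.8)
The statement is a biconditional together with an extra identity, and the engine behind all parts is a single explicit evaluation of the quantity $Q := -ff'' + (f')^2$. The plan is to treat the two implications in turn, using the curvature relation $K = -f''/f$ from Section 3 throughout.

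For the harder implication — that verifying $Q = \beta^2$ forces constant curvature — I would differentiate the identity in $r$. Since $\beta^2$ is constant, $\frac{d}{dr}Q = 0$, and a short computation gives $-f'f'' - ff''' + 2f'f'' = f'f'' - ff''' = 0$. The key observation is that this expression is, up to sign and the factor $f^2$, exactly the numerator of $K' = -\frac{f'''f - f'f''}{f^2}$; indeed $f'''f - f'f'' = -(f'f'' - ff''') = 0$, so $K' \equiv 0$ and the curvature is constant. Once constant curvature is established, the form of $f$ is dictated by the solutions of $f'' = -Kf$ recorded in Section 3: $f = A_0 e^{i\sqrt{K}r} + B_0 e^{-i\sqrt{K}r}$ when $K \neq 0$, and $f = Cr + D$ when $K = 0$.

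It then remains to pin down the constants, which is where the single computation of $Q$ does the work in both directions. Substituting $f = A_0 e^{i\sqrt{K}r} + B_0 e^{-i\sqrt{K}r}$ and using $f'' = -Kf$ so that $-ff'' = Kf^2$, one finds that the exponentials $e^{\pm 2i\sqrt{K}r}$ appearing in $Kf^2$ and in $(f')^2$ cancel while the constant cross terms reinforce, leaving $Q = 4K A_0 B_0$; hence $Q = \beta^2$ is equivalent to $A_0 B_0 = \beta^2/(4K)$. In the flat case $f = Cr+D$ gives $f'' = 0$ and $Q = C^2$, so $Q = \beta^2$ is equivalent to $C = \pm\beta$. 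Reading this equivalence in the reverse direction establishes the $(\Leftarrow)$ implication at the same time, so no separate argument is needed there.

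Finally, for the ``moreover'' clause I would pass from the complex constants to the real amplitudes via Euler's formula: writing $A_0 e^{i\sqrt{K}r} + B_0 e^{-i\sqrt{K}r} = (A_0 + B_0)\cos(\sqrt{K}r) + i(A_0 - B_0)\sin(\sqrt{K}r)$ and matching with $f = A\sin(\sqrt{K}r) + B\cos(\sqrt{K}r)$ gives $B = A_0 + B_0$ and $A = i(A_0 - B_0)$, whence $A^2 + B^2 = (A_0 + B_0)^2 - (A_0 - B_0)^2 = 4A_0 B_0 = \beta^2/K$. I do not anticipate a genuine obstacle here, since the proposition is essentially computational; the only points requiring care are the bookkeeping of the factor $4$ and the signs in $Q$, and keeping the degenerate $K = 0$ case cleanly separated from the oscillatory one.
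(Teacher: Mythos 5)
Your proof is correct, and it diverges from the paper's in one interesting place: the forward implication. The paper proves constancy of the curvature via the auxiliary identity $\left(\frac{(f')^2-\beta^2}{f^2}\right)'=-\frac{2ff'}{f^4}\left(-ff''+(f')^2-\beta^2\right)$, so that the hypothesis forces $\frac{(f')^2-\beta^2}{f^2}$ to equal a constant $-K$, and then combining this with $-ff''+(f')^2=\beta^2$ gives $f''/f=-K$. You instead differentiate the hypothesis itself, obtaining $f'f''-ff'''=0$, and recognize this as $-f^2K'$ where $K=-f''/f$, so $K'\equiv 0$ away from the poles. Both are one-line calculus arguments; yours is arguably the more direct and better motivated, while the paper's buys something extra: it identifies the constant explicitly, yielding the relation $(f')^2=\beta^2-Kf^2$, which is precisely the identity that resurfaces later in Proposition 3 (where $\Theta^2=\frac{1}{f^2\beta^2}-\frac{K}{\beta^4}$ is derived). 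The remainder of your argument --- solving $f''=-Kf$, computing $Q=4KA_0B_0$ in the oscillatory case and $Q=C^2$ in the flat case, and reading the equivalence in both directions to settle $(\Leftarrow)$ --- coincides with the paper's reverse direction. Your Euler-formula computation $A=i(A_0-B_0)$, $B=A_0+B_0$, hence $A^2+B^2=(A_0+B_0)^2-(A_0-B_0)^2=4A_0B_0=\beta^2/K$, is a genuine addition: the paper asserts the ``moreover'' clause without proof, and your two lines supply exactly the missing verification.
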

\begin{proof}
Note that 
\beq
\left(\frac{(f')^2-\beta b^2}{f^2}\right)'=-2\frac{ff'}{f^4}(-ff''+(f')^2-\beta^2)
\label{eqcurv}
\eeq

If  $-ff''+(f')^2=\beta^2$ then from Eq. (\ref{eqcurv}) it follows that
\[
\left(\frac{(f')^2-\beta^2}{f^2}\right)=-K
\]
for some constant $K$. Consequently, since $-ff''+(f')^2=\beta^2$, $f''/f=-K$
and the curvature is constant. 

On the other hand assume that $f''=-Kf$. Then, 
if $K\neq 0$, $f=A_0e^{i\sqrt{K}r}+B_0e^{-i\sqrt{K}r}$. Plugging this into  
$-ff''+(f')^2=\beta^2$ we find the condition $A_0B_0=\frac{\beta^2}{4K}$.
If $K=0$ then $f=Cr+D$. Plugging into the equation we find
$C^2=\beta^2$.

\end{proof}

\begin{proposition}
The function $f$ satisfies the equation \beq
\frac{f'(r)}{f(r)}=-\beta^2\Theta(r)
\label{eqint}
\eeq
for some antiderivative $\Theta(r)$ of $1/f(r)^2$, if and only if it satisfies the  nonlinear differential equation
\[
-ff''+(f')^2=\beta^2
\]\label{propintegral}
\end{proposition}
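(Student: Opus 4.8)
The plan is to prove the equivalence directly, differentiating the linear relation \eqref{eqint} to obtain the nonlinear ODE in one direction, and integrating the nonlinear ODE in the other, with the one subtlety being that the freedom in the choice of antiderivative $\Theta$ absorbs the constant of integration. First I would record the elementary identity
\[
\frac{d}{dr}\left(\frac{f'(r)}{f(r)}\right)=\frac{f''(r)f(r)-(f'(r))^2}{f(r)^2},
\]
valid wherever $f\neq 0$ (that is, away from the poles), together with the defining property of $\Theta$, namely $\Theta'(r)=1/f(r)^2$.

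For the forward implication I would assume $f'/f=-\beta^2\Theta$ and simply differentiate both sides. Using $\Theta'=1/f^2$ on the right and the identity above on the left gives
\[
\frac{f''f-(f')^2}{f^2}=-\frac{\beta^2}{f^2},
\]
and multiplying through by the common factor $f^2$ yields exactly $-ff''+(f')^2=\beta^2$.

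For the converse I would start from $-ff''+(f')^2=\beta^2$, divide by $f^2$, and apply the identity to recognize the left-hand side as a total derivative:
\[
\frac{d}{dr}\left(\frac{f'}{f}\right)=\frac{f''f-(f')^2}{f^2}=-\frac{\beta^2}{f^2}=-\beta^2\,\Theta'.
\]
Integrating gives $f'/f=-\beta^2\Theta+c$ for some constant $c$ and some fixed antiderivative $\Theta$ of $1/f^2$. Since any antiderivative of $1/f^2$ is determined only up to an additive constant, I would then replace $\Theta$ by $\Theta-c/\beta^2$, which is again an antiderivative of $1/f^2$; this absorbs $c$ and produces the desired relation $f'/f=-\beta^2\Theta$.

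The computation is entirely routine, so there is no genuine obstacle; the only point that requires a little care is the treatment of the constant of integration in the converse direction. This is precisely what the wording ``for some antiderivative'' in the statement is designed to accommodate, and it tacitly relies on $\beta\neq 0$ (which holds for the spherical surfaces under consideration, where $A^2+B^2=\beta^2/K>0$ by Proposition \ref{propf}).
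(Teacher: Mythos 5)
Your proof is correct and takes essentially the same route as the paper: the key step in both is the identity $\left(\frac{f'}{f}\right)'=\frac{f''f-(f')^2}{f^2}$ combined with $\Theta'=1/f^2$, with the constant of integration absorbed into the choice of antiderivative $\Theta$. The paper's own proof is just a one-line version of your converse direction (leaving the forward direction and the tacit $\beta\neq 0$ requirement implicit), so your write-up is the same argument carried out more carefully.
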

\begin{proof}
\[
\left(\frac{f'}{f}\right)'=\frac{f''f-(f')^{2}}{f^{2}}=-\frac{\beta^{2}}{f^{2}},
\]
which implies (\ref{eqint}) for some $\Theta(r)$.
\end{proof}
\begin{proposition}
 If $f$ satisfies the equation $-ff''+(f')^2=\beta^2$, and $\Theta(r)$ is defined as above, then
\[
\Theta^2=\frac{1}{f^2\beta^2}-\frac{K}{\beta^4}.
\]
\label{proptheta2}
\end{proposition}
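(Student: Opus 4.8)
The plan is to reduce everything to the single algebraic relation $(f')^2 = \beta^2 - Kf^2$ and then read off $\Theta^2$. The natural starting point is Proposition \ref{propintegral}: under the standing hypothesis $-ff''+(f')^2=\beta^2$, it supplies the explicit expression $\Theta = -f'/(\beta^2 f)$ for the antiderivative in question. Squaring this immediately gives $\Theta^2 = (f')^2/(\beta^4 f^2)$, so the whole statement becomes equivalent to establishing the first integral $(f')^2 = \beta^2 - Kf^2$.

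Next I would produce that first integral from the two facts already at hand. By the argument in Proposition \ref{propf}, the hypothesis $-ff''+(f')^2=\beta^2$ forces the curvature to be constant, that is $f''/f = -K$, hence $f'' = -Kf$. Substituting this into the defining equation to eliminate the second-derivative term turns $-ff''+(f')^2=\beta^2$ into $Kf^2 + (f')^2 = \beta^2$, which is exactly the relation sought.

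Finally I would combine the two steps: inserting $(f')^2 = \beta^2 - Kf^2$ into $\Theta^2 = (f')^2/(\beta^4 f^2)$ yields $\Theta^2 = (\beta^2 - Kf^2)/(\beta^4 f^2) = 1/(\beta^2 f^2) - K/\beta^4$, which is the claimed formula.

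I do not expect a genuine obstacle here, since every step is a one-line substitution; the only thing to watch is bookkeeping. In particular, the sign in $\Theta = -f'/(\beta^2 f)$ must disappear correctly upon squaring, and one must be careful that the $\Theta$ appearing in the conclusion is the \emph{specific} antiderivative normalized by Proposition \ref{propintegral} rather than an arbitrary antiderivative of $1/f^2$ differing by an additive constant. That is the one subtle point: the identity pins down the integration constant, so I would make explicit that "$\Theta$ defined as above" means the antiderivative satisfying $f'/f=-\beta^2\Theta$, since a different choice would spoil the formula.
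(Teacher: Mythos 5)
Your proposal is correct and follows essentially the same route as the paper's own proof: both square the relation $\Theta=-f'/(\beta^2 f)$ from Proposition~\ref{propintegral} to get $\Theta^2=(f')^2/(\beta^4 f^2)$, then use the constancy of curvature (via Proposition~\ref{propf}) to replace $f''$ by $-Kf$ and eliminate the second derivative. Your closing remark about $\Theta$ being the \emph{specific} antiderivative normalized by $f'/f=-\beta^2\Theta$, not an arbitrary one, is a worthwhile clarification that the paper leaves implicit.
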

\begin{proof}
 From Proposition \ref{propintegral} and Proposition \ref{propf} we obtain
\[
\Theta^2=\frac{(f')^2}{f^2\beta^4}=\frac{\beta^2+ff''}{f^2\beta^4}.
\]
Since, by Proposition \ref{propf}, the Gaussian curvature is constant we can substitute $f''=-Kf$ in the last equation. The result follows.
\end{proof}

\section{Additional Integrals}
In Section 2 we found that the system under consideration has two integrals of motion: the Hamiltonian and the angular momentum.
In this section  we consider the Kepler problem on surfaces of revolution that have constant Gaussian curvature and we find two additional integrals of motion (quadratic in the momenta).  Three of the first integrals are independent and thus the  system is maximally superintegrable.

We now look for first integrals that are quadratic in the momenta. The most general form of an invariant, quadratic in the momenta is

\[
I=ap_r^2+2bp_rp_{\theta}+cp_{\theta}^2+\Phi(r,\theta)
\] 
where $a,b,c$ and $\Phi$ are function of $r$ and $\theta$. 
Given the form of the constant of motion , it is straightforward to find the conditions that must be satisfied to grant its existence. The Poisson bracket of $I$ with the Hamiltonian is
\[\begin{split}
\{I,H\}=&\left(\frac{\pd I}{\pd r}\frac{\pd H}{\pd p_r}-\frac{\pd I}{\pd p_r}\frac{\pd H}{\pd r}\right)+\left(\frac{\pd I}{\pd \theta}\frac{\pd H}{\pd p_{\theta}}-\frac{\pd I}{\pd p_{\theta}}\frac{\pd H}{\pd \theta}\right)\\
=&a_rp_r^3+\left(2b_r+\frac{a_{\theta}}{f(r)^2}\right)p_r^2p_\theta+
\left(c_r+\frac{2af'}{f^3}+\frac{2b_\theta}{f^2}\right)p_rp_\theta^2+\left(\frac{2bf'}{f^3}+\frac{c_\theta}{f^2}\right)p_\theta^3\\
&+(\Phi_r-2aV_r)p_r+\left(-2bV_r+\frac{\Phi_\theta}{f^2}\right)p_\theta
\end{split}
\]
where we used the fact that $V$ is a function of $r$ only and thus $V_\theta=0$.

For $I$ to be a constant of motion the Poisson bracket $\{I,H\}$ must vanish for every value of the momenta. The vanishing of the Poisson bracket  implies the set of equations
\beq\left\{\begin{array}{l}
a_r=0\\
2\frac{bf'}{f^3}+\frac{c_\theta}{f^2}=0\\
2b_r+\frac{a_\theta}{f^2}=0\\
c_r+2\frac{af'}{f^3}+2\frac{b_\theta}{f^2}=0
\end{array}
\right.
\label{eqcond1}\eeq
and
\beq
\Phi_r-2aV_r=0, \quad\quad -2bV_r+\frac{\Phi_\theta}{f^2}=0
\label{eqcond2}\eeq
We now consider the motion under the generalized gravitational potential and we look for additional integrals of motions.
Thus let $V(r)=\gamma\Theta(r)$ then the equations (\ref{eqcond2})
take the form
\beq
\Phi_r=2\frac{\gamma a}{f^2},   \quad\quad \Phi_\theta=2\gamma b.
\eeq
The compatibility condition for the last two equations leads to 
\[
\Phi_{r\theta}=2\frac{\gamma a_\theta}{f^2}=2\gamma b_r=\Phi_{\theta_r}
\]
or $a_\theta=f^2(r)b_r$.
On the other hand from (\ref{eqcond1}) we have $a_\theta=-2b_rf^2(r)$ and thus $b_r=0$ and $b=b(\theta)$.
Moreover by (\ref{eqcond1}) we also have $a=a(\theta)$ and, since $b_r=0$ we have that $a_\theta=0$. Consequently $a=\mbox{constant}$.

Integrating the last equation of system (\ref{eqcond1}) we obtain
\[
c(r,\theta)=\frac{a}{f^2}-2b_\theta\frac{V(r)}{\gamma}+\Gamma(\theta)
\]
where $\Gamma$ is an arbitrary function of $\theta$. Differentiating with respect to $\theta$ yields
\[
c_\theta=-2b_{\theta\theta}\frac{V(r)}{\gamma}+\Gamma_\theta.
\]
On the other hand  the second equation of  system  (\ref{eqcond1})
implies
\[
c_\theta=-2\frac{bf'}{f}.
\]
Therefore we obtain
\[
-2b_{\theta\theta}\frac{V}{\gamma}+\Gamma_\theta=-2b\frac{f'}{f}.
\]
Let $V=\gamma \Theta(r)$ and let $-ff''+(f')^2=\beta^2$, then by Proposition \ref{propintegral}, we obtain
\[
b_{\theta\theta}-\frac{\Gamma_\theta}{2\Theta}=-\beta^2b.
\]
or if we assume $\Gamma=\mbox{constant}$ we obtain.
\[
b_{\theta\theta}+\beta^2b=0.
\]
The last equation is the equation of an harmonic oscillator 
and its general solution is $b(\theta)=E_1\sin(\beta \theta)-E_2\cos(\beta\theta)$, where $E_1$ and $E_2$ are arbitrary constants. 
Thus we find a constant of motion $I$ of the following form 
\[
I=2aH+\Gamma p_{\theta}^2+2E_1I_1-2E_2I_2
\]
where $H$ is the Hamiltonian and 
\beq\begin{split}
&I_1=\sin(\beta \theta)p_r p_{\theta}-\beta\cos(\beta\theta)\Theta(r)p_{\theta}^2-\frac{\gamma}{\beta} \cos(\beta\theta),\\
&I_2=-\cos(\beta \theta)p_r p_{\theta}-\beta\sin(\beta\theta)\Theta(r)p_{\theta}^2-\frac{\gamma}{\beta} \sin(\beta\theta)
\end{split}
\eeq

Note that $2aH+\Gamma p_\theta^2$ is a constant of motion and, since $E_1$ and $E_2$ are arbitrary constants, 
$I_1$ and $I_2$ are integrals  of motions that generalize the Laplace-Runge-Lenz vector. 

The four conserved quantities $H,p_\theta,I_1$ and $I_2$ are related by the equation 
\beq
I_1^2+I_2^2=2p_\theta^2H-\frac{K^2p_\theta^4}{\beta^2}+\frac{\gamma^2}{\beta^2}
\label{eqdependent}
\eeq
giving three independent constants of motion.
Equation (\ref{eqdependent}) can be easily derived from
\[
I_1^2+I_2^2=p_r^2p_\theta^2+\beta^2\Theta^2p_\theta^4+2\gamma\Theta p_\theta^2+\frac{\gamma^2}{\beta^2}
\]
using Proposition \ref{proptheta2}, i.e. using that $\Theta^2=\frac{1}{\beta^2f^2}-\frac{K^2}{\beta^4}$. Substituting the  expression for $\Theta^2$ in the previous equation we obtain
\[
I_1^2+I_2^2=2p_\theta^2\left(\frac{p_r^2}{2}+\frac{p_\theta^2}{2f^2}+\gamma\Theta\right)-\frac{K^2p_\theta^4}{\beta^2}+\frac{\gamma^2}{\beta^2}.
\]

\section{Equation of the Trajectory}
The shape and orientation of the orbits can be determined using the generalized Laplace-Runge-Lenz integrals:
\[
I_1\cos(\beta \theta)+I_2\sin(\beta\theta)=-\beta\Theta(r){p_{\theta}}^2-\frac{\gamma}{\beta}
\]
and thus
\[-\Theta(r)=\frac{\gamma}{\beta^2p_{\theta}^2}\left(1+\frac{\beta}{\gamma}(\cos(\beta\theta)I_1+\sin(\beta\theta)I_2\right)\]
or 
\beq
\rho=\frac{1}{p}(1+e\cos(\beta(\theta-\theta_0))
\eeq
where $\rho=-\Theta(r)$, $p=\left(\frac{\gamma}{\beta^2p_{\theta}^2}\right)^{-1}$, $(I_1\beta/\gamma)=e\cos(\beta\theta_0)$ and $(I_2\beta/\gamma)=e\sin(\beta\theta_0)$.

\section{A Geometric Description of The Flow}
\label{prova}

From now on we consider only  the Kepler problem on spherical surface of revolution with positive constant  curvature.
The sphere is the trivial  example  of  spherical surface of revolution with positive constant Gaussian curvature. All the other surfaces of revolution of this kind have metric singularities at the north and south pole. An example  is depicted in figure \ref{surface}.

We now present a description of the orbit structure of the  system, with special emphasis on the orbits near collision. The coordinates used here are a generalizations of those  of R. McGehee used by several authors to study collisions in Newtonian gravitational systems \cite{McGehee}.
If the Gaussian curvature is constant and positive, by Proposition \ref{propintegral}, since $f(r)=L\sin(\sqrt{K}r+\alpha)$ and $\beta^2=L^2K$, we find
\[
\Theta(r)=-\frac{1}{L^2\sqrt{K}}\cot(\sqrt{K}r+\alpha).
\]
Clearly $f(r)=0$ when $\sin(\sqrt{K}r+\alpha)=0$, i.e. when $\sqrt{K}r+\alpha=0$
or when $\sqrt{K}r+\alpha=\pi$. Let $r_N=-\alpha/\sqrt{K}$ and $r_S=(\pi-\alpha)/\sqrt{K}$.
Then $f'(r)=L\sqrt{K}\cos(\sqrt{K}r+\alpha)$, thus $f'(r_N)=\sqrt{K}L>0$ and $f'(r_S)=-\sqrt{K}L<0$.

Consequently $V(r)$ has an attractive singularity at the north pole $r=r_N$ and an equal repulsive singularity at the south pole $r=r_S$.
Since $V:(r_N,r_S)\rightarrow \mathbb{R}$ is real analytic, standard results of differential equation theory guarantee, for any initial data $(r(0),\theta(0),p_r(0),p_\theta(0))$, the existence and uniqueness of an analytic solution defined on a maximal interval $[0,t^*)$, where $0<t^*\leq \infty$. If   $t^*<\infty$, we say the solution  is {\it singular}.
In general there are different kinds of singularities of the solutions. However in this problem we have only one kind: the singular solutions are such that  $r(t)\rightarrow r_N$  as $t \rightarrow t^*$. In this case  we say that the solution experience a {\it collision}.
Thus  the   singularity of the potential at  the north pole $r=r_N$   induces singularities in the solutions and  corresponds to a collision (on the other hand the singularity of the potential at the south pole $r=r_S$ does not induce singularities in the solutions).

To study the flow near collisions, i.e. near $r_N$  consider the following transformation of coordinates

\beq
\left \{
\begin{array}{l}
v=\frac{p_r}{\sqrt{|\Theta(r)}|}\\
u=\frac{p_\theta}{f(r)\sqrt{|\Theta(r)|}}.
\end{array} \right.
\eeq

In these coordinates, taking $m=1$, the original system  becomes,
\beq\left \{\begin{array}{l}
\frac{dr}{dt}=v\sqrt{|\Theta(r)|}\\
\frac{dv}{dt}=\left(u^2 f'(r)-\frac{v^2}{2f(r)\Theta(r)}-\frac{\gamma}{f(r)|\Theta(r)|}\right)\frac{\sqrt{|\Theta(r)|}}{f(r)}
\\
\frac{d\theta}{dt}=u\frac{\sqrt{|\Theta(r)|}}{f(r)}\\
 \frac{du}{dt}=-uv\left(f'(r)+\frac{1}{2f(r)\Theta(r)}\right)\frac{\sqrt{|\Theta(r)|}}{f(r)}
\end{array}\right.
\label{system1}
\eeq
If we take the energy integral $H$ to have constant value $h$ then  the energy relation gives
\beq
\sgn( \Theta(r))\left(\frac{u^2+v^2}{2}\right)+\gamma=\frac{h}{\Theta(r)}.
\label{eqenergy}
\eeq
Similarly if we take the angular momentum $p_\theta$ to have constant value $c$, then the angular momentum relation takes the form
\beq
u=\frac{c}{f(r)\sqrt{|\Theta(r)|}}.
\eeq
Moreover, since $p_rp_\theta=uvf(r)|\Theta(r)|$ the integrals $I_1$ and $I_2$ can be written as
\beq\begin{split}
&I_1=\sin(\beta \theta)|\Theta(r)|f(r)uv-\beta\cos(\beta\theta)\Theta(r)|\Theta(r)|f(r)^2u^2-\frac{\gamma}{\beta} \cos(\beta\theta),\\
&I_2=-\cos(\beta \theta)|\Theta(r)|f(r)uv-\beta\sin(\beta\theta)\Theta(r)|\Theta(r)|f(r)^2u^2-\frac{\gamma}{\beta} \sin(\beta\theta)
\end{split}
\label{eqlentz2}\eeq

The system (\ref{system1}) is no longer Hamiltonian, but (\ref{eqenergy}) defines a codimension one invariant set
 \beq
{\bf M}(h)=\{(r,\theta,u,v)\in \R^4|~r\in [r_N,r_S] \mbox{ and the energy relation 
holds}\}.
\label{eqenergy1}
\eeq
which we continue to call energy  manifold. 

System (\ref{system1}) determines a vector field on ${\bf M}(h)$ which is undefined when $r=r_N$ or $r=r_S$. We now consider the singularity at $r=r_N$. We will come back to the singularity at $r=r_S$ in Section \ref{regularization}. Let
\beq\begin{split}
&{\bf N}=\{(r,\theta,u,v)\in {\bf M}(h)|~ r=r_N\}\\
\label{eqcollmanifold}
\end{split}
\eeq
which is the manifolds of states corresponding to collision. From the definition of ${\bf M}(h)$, we see that 
\[\begin{split}
&{\bf N}=\{(r,\theta,u,v)\in \R^4|~r=r_N \quad\mbox{and}\quad u^2+v^2=2\gamma \}\\
\end{split}
\]
and hence ${\bf N}$ is independent of $h$. 
Since $\theta$ is considered modulo $2\pi$, ${\bf N}$ is a two dimensional torus.

Note that the vector field given by (\ref{system1}) is not defined on ${\bf N}$. 
The orbits approaching ${\bf N}$  in a finite time are the collision orbits. However, we can scale the vector field in such a way that the new vector field can be extended to ${\bf N}$. We accomplish this
scaling by introducing a new time parameter  $\tau$ given by
\[
d\tau=\frac{\sqrt{|\Theta(r)|}}{f(r)}dt.
\]
System (\ref{eqmotion}) then, using Proposition \ref{propintegral}, becomes

\beq\left \{\begin{array}{l}
\dot r=vf(r)\\
\dot v=u^2 f'(r)-\frac{v^2}{2f(r)\Theta(r)}-\frac{\gamma}{f(r)|\Theta(r)|}\\
\dot \theta=u\\
\dot u=-uv\left(f'(r)+\frac{1}{2f(r)\Theta(r)}\right)
\end{array}\right.
\label{system2}
\eeq
where the dot indicates  differentiation with respect to $\tau$.
For this new vector field, ${\bf N}$ is an invariant set. We call ${\bf N}$ the {\itshape  collision manifold}. The flow on ${\bf N}$  is fictitious (i.e. it has no physical meaning), however, due to the continuity of solutions with respect to initial conditions, it gives informations about collision and near-collision solutions. 
\section{ Flow on the  Collision Manifold}
Let $k=\sqrt{K}L$. Since   $f'(r_N)=\sqrt{K}L>0$ where $\sqrt{K}L=|\beta|$ by Proposition \ref{propintegral}. Moreover, by  the same  proposition,  $f(r)\Theta(r)\rightarrow -k/\beta^2$ as $r\rightarrow 0$ and, near the north pole, we  have that $\Theta(r)<0$. 
It is easy to verify that in the variables $(r,v,\theta,u)$ the points $(r_N,\pm\sqrt{2\gamma},\theta,0)$ are equilibria for system (\ref{system2}). 
Therefore the sets
\beq
{\bf S}^{\pm}=\{(r,v,\theta,u)\in{\bf N}|v=\pm \sqrt{2\gamma}\}=\{(r,v,\theta,u)\in \R^4| (r,v,\theta,u)=(r_N,\pm\sqrt{2\gamma},\theta,0)\}
\eeq
 are circles of rest points on the collision manifold ${\bf N}$.
At these points  the linearized system has the matrix
\beq\left[
\begin{array}{cccc}
\pm k\sqrt{2\gamma}&0&0&0\\
*&\pm\beta^2\sqrt{2\gamma}/k&0&0\\
0&0&0&1\\
0&0&0&\mp\sqrt{2\gamma}(k-\beta^2/(2k))
\end{array}\right],
\label{eqeigenvalues}
\eeq
where $*$ denotes an entry of the matrix that is not used  in the computation of the eigenvalues. The corresponding eigenvalues are 
$\pm k\sqrt{2\gamma}$, $\pm\beta^2\sqrt{2\gamma}/k$, $0$ and $\mp\sqrt{2\gamma}(k-\beta^2/(2k))$.

Using   that $f(r)\Theta(r)\rightarrow -k/\beta^2$ as $r\rightarrow 0$ and  that $\Theta(r)<0$ near the north pole we see that the  restriction of system (\ref{system2}) to ${\bf N}$  yields the system 

\[\left \{\begin{array}{l}
\dot v=\left (k-\frac{\beta^2}{2k}\right)u^2\\
\dot \theta=u\\
\dot u=-uv\left(k-\frac{\beta^2}{2k}\right)
\end{array}\right.
\label{system3}
\]
where we have used  the energy relation to simplify $\dot v$.
Introducing the angular variable $\chi$ via
\beq
\begin{split}
u=\sqrt{2\gamma}\cos\chi\\
v=\sqrt{2\gamma}\sin\chi
\end{split}
\eeq
one finds
\beq
\frac{d\chi}{d\theta}=\left(k-\frac{\beta^2}{2k}\right)=\frac{|\beta|}{2}.
\eeq
The solutions of this vector field are sketched in figure \ref{figflow}.

\begin{figure}[t]
  \begin{center}
      {\resizebox{!}{5.5cm}{\includegraphics{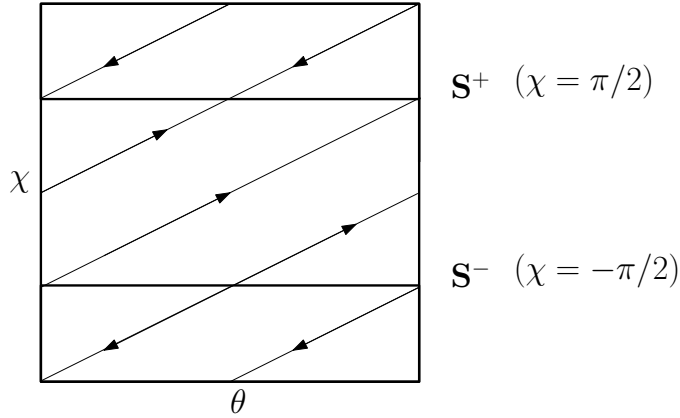}}  }
    \caption{The flow on the collision manifold ${\bf N}$ of the generalized Kepler problem\label{figflow} for $\beta=\frac 1 2$} 
  \end{center}

\end{figure}

In the resulting phase portrait $\dot v>0$
so all  orbits travel from the lower circle of rest points ${\bf S}^-$ to the upper circle ${\bf S}^+$. Moreover, in the  $(\theta,\chi)$ variables, the unstable manifolds of points  on ${\bf S}^-$ are just straight lines of slope $|\beta|/2$.
The  unstable manifolds at $\theta=\theta^*$, $v=-\sqrt{2\gamma}$
in ${\bf N}$ do not necessarily join up with the stable manifold at $\theta=\theta^*$, $v=\sqrt{2\gamma}$.
Only when we have 
\beq
\frac{d\chi}{d\theta}=\left(k-\frac{\beta^2}{2k}\right)=\frac{|\beta|}{2}=\frac{1}{2n} \quad \quad n=1,2,3,\ldots
\eeq
we have this property. That is if $|\beta|=1/n$ for a positive integer $n$, then  each branch of the unstable manifold makes $n$ circuits of ${\bf N}$ before rejoining the upper circle of rest points at the same $\theta$-value.

When 
\beq
\frac{d\chi}{d\theta}=\left(k-\frac{\beta^2}{2k}\right)=\frac{|\beta|}{2}=\frac{1}{2n+1} \quad \quad n=0,1,2,3,\ldots
\eeq
or equivalently when $|\beta|=2/(2n+1)$,
the unstable manifold leaving $\theta=\theta^*$ join up with the stable manifold at $\theta=\theta^*+\pi$ after making $n+1/2$ circuits. 

In all other cases, the two branches of the unstable manifold reach distinct equilibrium points. 

{\it Physical Interpretation.} We now give a physical interpretation of the solution described above. As we mentioned earlier, the orbits on the collision manifold have no physical meaning, but give information about collision and near collision orbits.
A near collision orbit makes $n$ revolutions about $r=0$ before exiting at an angle which depends on $\beta$.
In the two special cases considered above the orbit either exits in the direction  in which approached collision ($|\beta|=1/n$) or else in exactly the opposite direction ($|\beta|=2/(2n+1)$).

If $\beta$ is not of one of these two forms, then nearby  initial conditions will lead to quite different behavior near collision (this is the basic idea behind Easton's notion of topological regularization). In this case we cannot join orbits coming to collision with orbits leaving collision in a meaningful way so to make the resulting flow  continuous. 

\section{An Isolating Block About Collisions}

Let ${\bf M}$ be a smooth manifold and let $\psi:{\bf M}\times \R\rightarrow {\bf M}$ be a flow on ${\bf M}$. A subset ${\bf N}\subset {\bf M}$ is called invariant if $\psi({\bf N},\R)={\bf N}$.

\begin{definition}
 A compact invariant set ${\bf N}\subset {\bf M}$ is called  {\itshape isolated} if there exist an open set ${\bf U}$ containing ${\N}$ such that $\psi(x,\R)\in{\bf U}$ implies $x\in {\bf N}$. The set ${\bf U}$ is called isolating neighborhood for ${\bf N}$.
\end{definition}

Now let ${\bf B}$ be a compact subset of ${\bf M}$ with non-empty interior and suppose that ${\bf b}=\pd{\bf B}$ is a smooth submanifold of ${\bf M}$. 
Denote by ${\bf b}^+$ the set of {\itshape ingress points} of ${\bf b}$,

\[{\bf b}^+=\{x\in{\bf b}|~\psi(x,(0,-\epsilon))\cap {\bf B}=\emptyset \mbox{ for some } \epsilon>0)\},\]
by ${\bf b}^-$ the set of {\itshape egress points} of ${\bf b}$,
\[{\bf b}^-=\{x\in{\bf b}|~\psi(x,(0,\epsilon))\cap {\bf B}=\emptyset \mbox{ for some } \epsilon>0)\},\]
and by ${\bf t}$ the set of tangency points of ${\bf b}$
\[{\bf t}=\{x\in{\bf b}|~ \dot \psi(x,0) \mbox{ is tangent to } {\bf b} \mbox{ at } x\}.\]
In general ${\bf b}^+,{\bf b}^-$ and ${\bf t}$ might be variously related but their union must always be ${\bf b}$: points of ${\bf b}^+$ leave ${\bf B}$ going backwards, those in ${\bf b}^-$ leave ${\bf b}$ going forwards, and the remainder must be in ${\bf t}$.

\begin{definition}
${\bf B}$ is called an {\itshape isolating block} if ${\bf t}={\bf b}^+\cap{\bf b}^-$ (and ${\bf t}$ is a smooth submanifold of ${\bf b}$ with codimension 1).
\end{definition}

\begin{definition}
 Let ${\bf N}$ be an isolated invariant set, and let ${\bf B}$ be an isolating block. Then ${\bf B}$ is said to {\itshape isolate} ${\bf N}$ if $\mbox{int}({\bf B})$ is an isolating neighborhood for ${\bf N}$
\end{definition}

The following theorem was proved by Conley and Easton \cite{ConleyEaston}

\begin{theorem}
If ${\bf N}$ is an isolated invariant set, then there exists an isolating block which isolates ${\bf N}$. If ${\bf B}$ is an isolating block, then there exists an isolated invariant set (possibly empty) which is isolated by ${\bf B}$. 
\end{theorem}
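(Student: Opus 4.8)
The plan is to prove the two implications separately. The second (every isolating block isolates some, possibly empty, isolated invariant set) is essentially formal and follows directly from the ingress/egress/tangency decomposition of ${\bf b}$; the first (every isolated invariant set admits an isolating block) carries all of the analytic difficulty, and I would treat it with a Lyapunov-function construction.

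For the second implication I would start from a given isolating block $\B$ and take the maximal invariant subset
\[
\N=\{x\in\B\mid \psi(x,\R)\subset\B\}.
\]
This set is invariant by construction, and compact because it is a closed subset of the compact set $\B$ (closedness follows from continuity of $\psi$, since $\N=\bigcap_{t}\psi(\cdot,-t)^{-1}(\B)$). The one substantive point is that $\N\subset\mbox{int}(\B)$: if some $x\in\N$ lay on ${\bf b}=\pd\B$, then, since ${\bf b}={\bf b}^{+}\cup{\bf b}^{-}\cup{\bf t}$ with ${\bf t}={\bf b}^{+}\cap{\bf b}^{-}$, the point $x$ would be an ingress point, an egress point, or both; an ingress point leaves $\B$ in backward time and an egress point leaves it in forward time, so in every case the full orbit of $x$ would exit $\B$, contradicting $x\in\N$. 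Hence $\N\subset\mbox{int}(\B)$, and taking ${\bf U}=\mbox{int}(\B)$ gives an isolating neighborhood: if $\psi(x,\R)\subset\mbox{int}(\B)\subset\B$ then $x\in\N$ by maximality. Thus $\B$ isolates $\N$.

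For the first implication I would begin with a compact isolating neighborhood $K$, that is, the closure of a smaller open isolating neighborhood, so that $\N=\mbox{Inv}(K)\subset\mbox{int}(K)$. The key structural fact is that no point of $K\setminus\N$ has its whole orbit in $K$; every such point exits $K$ in forward or backward time. I would then choose a smooth $g\ge 0$ that vanishes on a neighborhood of $\N$ and is positive off it, and form
\[
\ell^{+}(x)=\int_{0}^{\infty} g(\psi(x,t))\,dt,\qquad \ell^{-}(x)=\int_{0}^{\infty} g(\psi(x,-t))\,dt,
\]
chosen (with appropriate support/weight so the integrals converge) so that $\dot\ell^{+}=-g$ and $\dot\ell^{-}=g$. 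Thus $\ell^{+}$ is nonincreasing and $\ell^{-}$ nondecreasing along the flow, strictly so where $g>0$, and for a suitable regular value $c>0$ the candidate block is $\B=\{x\in K\mid \ell^{+}(x)\le c,\ \ell^{-}(x)\le c\}$, whose face $\{\ell^{+}=c\}$ is crossed inward (ingress) and whose face $\{\ell^{-}=c\}$ is crossed outward (egress), with $\N$ lying in the interior since there $\ell^{+}=\ell^{-}=0$.

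The hard part will be that the naive intersection $\{\ell^{+}\le c\}\cap\{\ell^{-}\le c\}$ has a corner along $\{\ell^{+}=\ell^{-}=c\}$, whereas the definition demands that ${\bf b}=\pd\B$ be a \emph{smooth} submanifold meeting the flow with only external tangencies, so that precisely ${\bf t}={\bf b}^{+}\cap{\bf b}^{-}$ and ${\bf t}$ is a codimension-one submanifold of ${\bf b}$. The crux is therefore to round this corner: one must smooth the join of the ingress and egress faces so that, along the resulting fold, the flow is tangent to ${\bf b}$ quadratically and from outside, producing exactly the common ingress--egress locus as ${\bf t}$. This requires (i) verifying convergence and regularity of $\ell^{\pm}$ using compactness of $K$ and the finite time that orbits of $K\setminus\N$ spend where $g>0$, and (ii) a transversality/genericity argument controlling the fold locus. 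Turning the topological picture of orbits entering and leaving $K$ into a smooth block with the prescribed tangency structure is where the differentiable-structure hypotheses are genuinely used, and I expect it to be the main obstacle.
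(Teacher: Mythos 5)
You should first be aware that the paper contains no proof of this statement at all: it is quoted verbatim as a known result of Conley and Easton \cite{ConleyEaston}, so there is no internal argument to compare yours against, and your attempt must stand on its own. On that footing, your second implication is correct and complete. Taking ${\bf N}=\{x\in{\bf B}\mid \psi(x,\R)\subset{\bf B}\}$, compactness and invariance are immediate, and your key observation is right: since ${\bf b}={\bf b}^{+}\cup{\bf b}^{-}\cup{\bf t}$ and, in an isolating block, ${\bf t}={\bf b}^{+}\cap{\bf b}^{-}$, every boundary point leaves ${\bf B}$ in forward or backward time, so ${\bf N}\subset\mbox{int}({\bf B})$ and $\mbox{int}({\bf B})$ is an isolating neighborhood by maximality of ${\bf N}$.

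The first implication, however, is not proved, and it is the one carrying the entire content of the theorem. Three concrete gaps: (i) your integrals $\ell^{\pm}(x)=\int_{0}^{\infty}g(\psi(x,\pm t))\,dt$ need not converge --- a point of $K\setminus{\bf N}$ whose forward orbit leaves $K$ may afterwards spend infinite time in $\{g>0\}$, and the parenthetical ``chosen with appropriate support/weight so the integrals converge'' is a wish, not a construction; the standard fixes (integrating only over time spent in $K$, or building Lyapunov functions from the attractor--repeller structure as Conley does) produce different functions whose regularity must then be established. (ii) Even where defined, such $\ell^{\pm}$ are a priori only continuous; obtaining smooth functions with $\dot\ell^{\pm}=\mp g$, so that level sets are submanifolds for a regular value $c$, requires a smoothing argument you do not supply. (iii) The corner-rounding along $\{\ell^{+}=\ell^{-}=c\}$, which you yourself call ``the main obstacle,'' is precisely where the theorem lives: the whole point is to manufacture a smooth boundary ${\bf b}$ whose tangency set satisfies ${\bf t}={\bf b}^{+}\cap{\bf b}^{-}$ and is a codimension-one submanifold of ${\bf b}$, and naming this as an expected difficulty does not discharge it. Your plan is in the right spirit --- indeed the Wilson--Yorke result quoted as Lemma 1 in the paper is exactly the packaged form of the Lyapunov-function-to-block step you are reaching for --- but as written the hard direction remains open, which is presumably why the paper simply cites \cite{ConleyEaston} rather than proving it.
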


We now want to define ${\bf B}$ in terms of a real valued function on ${\bf M}$. Let $I:{\bf M}\rightarrow \R$ be a smooth map.  We write 
\[
I^*(x,t):=I(\psi(x,t)),
\]
and define
\[
\dot I(x)=\dot I^*(x,0)\quad \mbox{and}\quad \ddot I(x)=\ddot I^*(x,0)\]
where $\dot I^*$ and $\ddot I^*$ denote derivatives with respect to time. The following lemma is proved by Wilson and Yorke \cite{Wilson} (the symbol ``D'' denotes derivative). 

\begin{lemma}
Let $I:{\bf M}\rightarrow[0,\infty)$, and let $\delta_0>0$. Suppose that $DI(x)\neq 0$ whenever $0<I(x)\leq \delta_0$. Suppose also that $\ddot I(x)>0$ whenever $0<I(x)<\delta_0$ and $\dot I(x)=0$. Then ${\bf N}\equiv I^{-1}(0)$ is an isolated invariant set and $I^{-1}([0,\delta])$ is an isolating block for ${\bf N}$ for each $\delta\in(0,\delta_0].$
\label{Wilson}
\end{lemma}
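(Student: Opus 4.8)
The plan is to derive everything from the regular-value structure of $I$ together with the \emph{external tangency} encoded in the two hypotheses, reading $\dot I(x)=DI(x)\cdot X(x)$ as the Lie derivative of $I$ along the vector field $X$ generating $\psi$, and $\ddot I$ as the second time-derivative of $t\mapsto I^*(x,t)$. Fix $\delta\in(0,\delta_0]$ and set ${\bf B}=I^{-1}([0,\delta])$. First I would show ${\bf B}$ is a manifold-with-boundary whose boundary is ${\bf b}=I^{-1}(\delta)$: since $I$ is continuous, $I^{-1}([0,\delta))$ is open, so $\pd{\bf B}\subseteq I^{-1}(\delta)$, and because $DI\neq 0$ on $I^{-1}((0,\delta_0])$ the value $\delta$ is regular, whence ${\bf b}$ is a smooth codimension-one submanifold by the implicit function theorem. (Compactness I take from the ambient setting; in the application $I$ is restricted to a compact sublevel set inside ${\bf M}(h)$.)

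Next I would classify the boundary points by the sign of $\dot I$. For $x\in{\bf b}$ we have $I^*(x,0)=\delta$ with initial time-derivative $\dot I(x)$, so: if $\dot I(x)<0$ the orbit enters ${\bf B}$ for small positive times and lies outside for small negative times, hence $x\in{\bf b}^+$; if $\dot I(x)>0$ then symmetrically $x\in{\bf b}^-$; and $\dot I(x)=0$ characterizes the tangency set ${\bf t}$. In particular a point with $\dot I(x)\neq 0$ lies in exactly one of ${\bf b}^{\pm}$, so ${\bf b}^+\cap{\bf b}^-\subseteq{\bf t}$.

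The heart of the argument is the external-tangency condition, and this is where the key computation lies. Let $x\in{\bf t}$, so $I(x)=\delta$ and $\dot I(x)=0$; for $\delta<\delta_0$ the hypothesis gives $\ddot I(x)>0$, so $t=0$ is a strict local minimum of $t\mapsto I^*(x,t)$, whence $I^*(x,t)>\delta$ for all small $t\neq 0$ and the orbit leaves ${\bf B}$ in both time directions, i.e. $x\in{\bf b}^+\cap{\bf b}^-$. Combined with the inclusion above this yields ${\bf t}={\bf b}^+\cap{\bf b}^-$, the defining property of an isolating block. For the codimension-one smoothness of ${\bf t}$ I would observe that at $x\in{\bf t}$ the tangency $\dot I(x)=0$ means $X(x)\in T_x{\bf b}$, while $D(\dot I)(x)\cdot X(x)=\ddot I(x)\neq 0$; hence the restriction $\dot I|_{\bf b}$ has nonvanishing differential at $x$ in the direction $X(x)\in T_x{\bf b}$, so $0$ is a regular value of $\dot I|_{\bf b}$ and ${\bf t}=(\dot I|_{\bf b})^{-1}(0)$ is a smooth submanifold of ${\bf b}$ of codimension one.

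It remains to show ${\bf N}=I^{-1}(0)$ is an isolated invariant set, and this is the part I expect to be the main obstacle. Invariance itself I would take from the application, where ${\bf N}$ is the collision manifold already shown invariant (equivalently, $I$ attains its minimum $0$ along the invariant ${\bf N}$, so $\dot I\equiv 0$ there); note that the hypotheses alone do not force it, so this input is needed. For isolation I claim $\mathrm{int}({\bf B})=I^{-1}([0,\delta))$ is an isolating neighborhood: along any orbit, in the range $I\in(0,\delta_0)$ every critical point of $t\mapsto I^*(x,t)$ is a strict minimum by $\ddot I>0$, so $I^*$ has at most one such critical point and is monotone or $U$-shaped there. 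If an orbit stayed in $\mathrm{int}({\bf B})$ for all time with $I>0$ somewhere, its $\omega$- and $\alpha$-limit sets would be invariant sets on which $I$ is constant and $\dot I\equiv 0$; a constant value in $(0,\delta_0)$ forces $\ddot I=0$ on such a set, contradicting $\ddot I>0$, so both limiting values must be $0$, and then a positive interior value would produce an interior local maximum of $I^*$, contradicting the monotone/$U$-shaped profile (if instead $I^*$ reaches $0$ in finite time, invariance of ${\bf N}$ puts the whole orbit in ${\bf N}$). Hence the maximal invariant set in $\mathrm{int}({\bf B})$ is exactly ${\bf N}$. The only delicate point is the endpoint $\delta=\delta_0$, where $\ddot I>0$ is available only for $I<\delta_0$; there one first notes that orbits in $\mathrm{int}({\bf B})$ keep $I<\delta_0$ strictly, so the same limit-set argument applies unchanged.
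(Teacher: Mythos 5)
The paper does not actually prove this lemma: it is quoted from Wilson and Yorke \cite{Wilson}, so there is no in-paper argument to compare yours against, and your proposal in effect supplies what the paper outsources. Your argument is the standard one and is essentially sound: the regular-value description of ${\bf b}=I^{-1}(\delta)$, the classification of ingress/egress/tangency points by the sign of $\dot I$, the external-tangency argument giving ${\bf t}={\bf b}^+\cap{\bf b}^-$, the regular-value argument for smoothness of ${\bf t}$ (using that $\ddot I(x)\neq 0$ is the derivative of $\dot I|_{\bf b}$ in the direction $X(x)\in T_x{\bf b}$), and the limit-set argument for isolation are all correct for $\delta<\delta_0$. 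Your observation that invariance of ${\bf N}=I^{-1}(0)$ does \emph{not} follow from the stated hypotheses is also right: a smooth $I\geq 0$ vanishing to infinite order at the boundary of its zero set gives flows whose orbits escape $I^{-1}(0)$ without ever producing a tangency at positive levels, so this hypothesis must indeed be imported; that is consistent with how the lemma is used in the paper, where ${\bf N}$ is the collision manifold, already invariant for system (\ref{system2}).

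The one genuine soft spot is the endpoint $\delta=\delta_0$, and your patch for it does not work as stated. Two things fail there: first, a tangency point $x\in I^{-1}(\delta_0)$ with $\dot I(x)=0$ is not covered by the hypothesis ``$\ddot I(x)>0$ whenever $0<I(x)<\delta_0$'', so the externality of tangencies --- hence the block property itself --- is not established for ${\bf B}=I^{-1}([0,\delta_0])$; second, in the isolation argument an orbit lying in the interior of ${\bf B}$ can increase monotonically with $I\rightarrow\delta_0$, so that its $\omega$-limit set sits exactly on the level set $I^{-1}(\delta_0)$, where again the hypothesis yields no contradiction; saying that ``orbits in the interior keep $I<\delta_0$ strictly'' does not help, because their limit sets need not. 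The honest resolutions are either to prove the conclusion only for $\delta\in(0,\delta_0)$ --- which already gives that ${\bf N}$ is an isolated invariant set, and is all that is needed after shrinking $\delta$ --- or to strengthen the hypothesis to $\ddot I(x)>0$ whenever $0<I(x)\leq\delta_0$ and $\dot I(x)=0$, which is exactly what the paper's Lemma \ref{lemmaMcGehee} verifies in the application (there $\ddot I>0$ is shown whenever $0<\delta\leq\delta_0$), and with which your argument goes through verbatim for all $\delta\in(0,\delta_0]$.
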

We now define the subsets of ${\bf b}$ that are asymptotic to ${\bf N}$:
\[\begin{split}
 &  {\bf a}^+=\{x\in {\bf b}^+|\psi(x,[0,\infty))\subset{\bf B}\},\\
 &{\bf a}^-=\{x\in {\bf b}^-|\psi(x,(-\infty,0])\subset{\bf B}\}.\\
  \end{split}
\]
By definition  $ {\bf b}^+-{\bf a}^+$ denotes the set of point in ${\bf b}$ with the property that the corresponding solutions enter in ${\bf B}$ without staying permanently there, i.e.  there exists a $t>0$ such that $\psi(x,t)\not\in {\bf B} $.
 Thus we can define the time spent in the block for a point $x\in{\bf b}^+-{\bf a}^+$ by
\[
T(x)=\inf\{t>0|\psi(x,t)\not\in{\bf B}\}.
\]
Note that $\psi(x,[0,T(x)])\in {\bf B}$ and that $\psi(x,T(x))\in {\bf b}^-$. Now we define the map across the block
\[
\Psi:{\bf b}^+-{\bf a}^+\rightarrow {\bf b}^--{\bf a}^-:x\rightarrow \psi(x,T(x)).
\]

\begin{theorem}[Conley and Easton \cite{ConleyEaston}]
 If ${\bf B}$ is an isolating block, then 
$\Psi:{\bf b}^+-{\bf a}^+\rightarrow {\bf b}^--{\bf a}^-$ is a diffeomorphism
\label{thConleyEaston}
\end{theorem}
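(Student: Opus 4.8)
The plan is to realize $\Psi$ as a composition of the flow with a first-exit-time function $T$, to exhibit its inverse by running the flow backward, and to prove smoothness by the implicit function theorem everywhere except on the tangency set, where a Morse normal form is needed. I begin by fixing, near $\mathbf{b}$, a smooth defining function $g:\mathbf{M}\to\R$ with $\mathbf{b}=g^{-1}(0)$, $\mathbf{B}=g^{-1}((-\infty,0])$ and $Dg\neq 0$ on $\mathbf{b}$. Denoting by $\dot g$ and $\ddot g$ the first and second derivatives of $g$ along the flow, the trichotomy of $\mathbf{b}$ reads $\mathbf{b}^+=\{\dot g\le 0\}$, $\mathbf{b}^-=\{\dot g\ge 0\}$ and $\mathbf{t}=\{\dot g=0\}$ on $\mathbf{b}$; the isolating-block condition $\mathbf{t}=\mathbf{b}^+\cap\mathbf{b}^-$ together with the Wilson--Yorke mechanism (Lemma \ref{Wilson}) forces the tangencies to be external, i.e. $\ddot g>0$ on $\mathbf{t}$.

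\emph{Well-definedness and bijectivity.} For $x\in\mathbf{b}^+-\mathbf{a}^+$ the exit time $T(x)$ is finite precisely because $x\notin\mathbf{a}^+$, and $y:=\psi(x,T(x))\in\mathbf{b}^-$. Since the backward orbit of $y$ returns to the ingress point $x$ and then exits $\mathbf{B}$, we have $y\notin\mathbf{a}^-$, so $\Psi$ maps $\mathbf{b}^+-\mathbf{a}^+$ into $\mathbf{b}^--\mathbf{a}^-$. Applying the identical construction to the time-reversed flow yields a backward first-exit time $S$ and a map $\tilde\Psi(y)=\psi(y,-S(y))$ on $\mathbf{b}^--\mathbf{a}^-$. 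Because an orbit arc lying in $\mathbf{B}$ has a unique first forward crossing and a unique first backward crossing of $\mathbf{b}$, the compositions $\tilde\Psi\circ\Psi$ and $\Psi\circ\tilde\Psi$ are both the identity, so $\Psi$ is a bijection with $\Psi^{-1}=\tilde\Psi$.

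\emph{Smoothness away from $\mathbf{t}$.} If $T(x)>0$ then the exit point $y=\psi(x,T(x))$ cannot be a tangency: an external tangency has the orbit outside $\mathbf{B}$ on both time-sides of $y$, contradicting that the orbit lies in $\mathbf{B}$ just before its first exit. Hence $\dot g(y)>0$, and the implicit function theorem applied to $G(x,t):=g(\psi(x,t))$ at $(x,T(x))$ -- where $G=0$ and $G_t=\dot g(y)\neq 0$ -- produces a smooth local solution $t=T(x)$. Thus $\Psi=\psi(\cdot,T(\cdot))$ is smooth on the open set $\{T>0\}$, and $\{T=0\}$ is exactly the tangency set $\mathbf{t}$, on which $\Psi$ is the identity.

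\emph{The main obstacle: crossing $\mathbf{t}$.} At $x_0\in\mathbf{t}$ the implicit function theorem degenerates because $G_t(x_0,0)=\dot g(x_0)=0$. The key is the nondegeneracy $G_{tt}(x_0,0)=\ddot g(x_0)>0$. By the Morse lemma with parameters I can reparametrize time smoothly, $t=\phi(x,s)$, so that $G(x,\phi(x,s))=(s-\sigma(x))^2-\rho(x)$ with $\sigma,\rho$ smooth; the two crossing times are then the roots $s=\sigma(x)\pm\sqrt{\rho(x)}$, symmetric about the smooth vertex $\sigma(x)$. For an ingress point the entry root is pinned to the boundary value $s_0(x)=\phi^{-1}(x,0)$, which is the \emph{left} root, so the exit root is $s_1(x)=2\sigma(x)-s_0(x)$ -- manifestly smooth, the troublesome $\sqrt{\rho}$ having been eliminated by the pinning. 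Pushing back, $T(x)=\phi(x,s_1(x))$ then extends smoothly across $\mathbf{t}$ with $T\equiv 0$ there, so $\Psi$ is smooth on all of $\mathbf{b}^+-\mathbf{a}^+$. Running the same argument for the reversed flow shows $\Psi^{-1}=\tilde\Psi$ is smooth, and therefore $\Psi$ is a diffeomorphism. I expect this last paragraph to be the crux: the exit time depends a priori only continuously (through a square root) on the distance to tangency, and the whole point is that pinning the entry root at the boundary converts this into the smooth quantity $2\sigma-s_0$.
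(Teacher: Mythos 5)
This theorem is not proved in the paper at all: it is imported verbatim from Conley--Easton \cite{ConleyEaston}, so there is no ``paper approach'' to compare against, and your proposal has to stand on its own. Its architecture is the standard one and most of it is sound: realizing $\Psi$ as $\psi(\cdot,T(\cdot))$, getting bijectivity from the time-reversed flow, smoothness off the tangency set by the implicit function theorem, and the ``pinning'' trick at $\mathbf{t}$ (the exit root $2\sigma-s_0$ being smooth even though each root separately involves $\sqrt{\rho}$) is exactly the right mechanism for crossing a quadratic tangency.

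The genuine gap is the sentence claiming that the isolating-block condition ``together with the Wilson--Yorke mechanism'' forces $\ddot g>0$ on $\mathbf{t}$. The hypothesis of the theorem is only that $\mathbf{B}$ is an isolating block, i.e. $\mathbf{t}=\mathbf{b}^+\cap\mathbf{b}^-$ with $\mathbf{t}$ a codimension-one submanifold of $\mathbf{b}$; this makes every tangency \emph{external}, which yields $\ddot g\geq 0$, not $\ddot g>0$. The defect cannot be repaired by a cleverer choice of defining function: any two defining functions differ by a positive factor $\lambda$, and at a point where $g=\dot g=0$ the second time-derivative of $\lambda g$ equals $\lambda\ddot g$, so degeneracy of the tangency is intrinsic. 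Tangencies of quartic or even infinite order are consistent with the definition, and there the parametrized Morse lemma --- which you correctly identify as the crux --- is simply unavailable. The problem is not cosmetic: for the parallel flow $\dot x=1$, $\dot y=0$ with block locally $\{y\geq p(x)\}$, where $p(x)=e^{-1/x^2}$ for $x\leq 0$ and $p(x)=e^{-1/x}$ for $x>0$ (smooth, with a flat external tangency at the origin), the across-the-block map in the boundary coordinate is $x\mapsto x^2$, a smooth homeomorphism whose inverse fails to be differentiable at the tangency. So nondegeneracy must be taken as a hypothesis, not derived. What you have actually proved is the theorem for blocks admitting a defining function with $\ddot I>0$ at tangencies --- e.g. the blocks $\mathbf{B}(h,\delta)$ produced by Lemma \ref{lemmaMcGehee} via Lemma \ref{Wilson}, which is all this paper ever uses --- but Wilson--Yorke blocks are a strictly smaller class than isolating blocks, so the statement as quoted (for an arbitrary isolating block) is not established by your argument.
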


\begin{definition}
An isolating block {\bf B} is said to be {\it trivializable} 
if $ \Psi$  extends uniquely to a diffeomorphism from ${\bf b}^+$ to ${\bf b}^-$.
\end{definition}

Trivializability is actually a property of an isolated invariant set: 
\begin{lemma}[Conley\cite{Conley}] Suppose that ${\bf N}$ is an isolated invariant set and that ${\bf B}_1$ and ${\bf B}_2$ isolate ${\bf N}$. Then ${\bf B}_1$ is trivializable if and only if ${\bf B}_2$ is trivializable. 
\end{lemma}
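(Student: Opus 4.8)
The plan is to show that trivializability is intrinsic to ${\bf N}$ by comparing any two isolating blocks not directly, but through a common small ``standard'' sub-block, and then by analyzing the nested situation with flow-induced maps between the boundaries. First I would produce a single block contained in both ${\bf B}_1$ and ${\bf B}_2$. The Conley--Easton existence theorem is proved by exhibiting a smooth function $I:{\bf M}\to[0,\infty)$ with $I^{-1}(0)={\bf N}$ satisfying the hypotheses of Lemma \ref{Wilson}, so that each level set ${\bf B}_\delta:=I^{-1}([0,\delta])$ with $0<\delta\le\delta_0$ is an isolating block for ${\bf N}$. Since $\mbox{int}({\bf B}_1)\cap\mbox{int}({\bf B}_2)$ is an open neighborhood of ${\bf N}$ and the ${\bf B}_\delta$ shrink to ${\bf N}$ as $\delta\to0$, for $\delta$ small enough ${\bf B}_\delta\subset\mbox{int}({\bf B}_1)\cap\mbox{int}({\bf B}_2)$. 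It therefore suffices to prove, for an arbitrary isolating block ${\bf B}$ and any such ${\bf B}_\delta\subset\mbox{int}({\bf B})$, that ${\bf B}$ is trivializable if and only if ${\bf B}_\delta$ is; applying this to ${\bf B}={\bf B}_1$ and to ${\bf B}={\bf B}_2$ and chaining the two equivalences through ${\bf B}_\delta$ yields the lemma. This reduction is not circular, as it uses only Lemma \ref{Wilson} and the nested-case equivalence proved below.

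For the nested case I would take two isolating blocks ${\bf B}'\subset\mbox{int}({\bf B})$ of ${\bf N}$ and set $P={\bf B}\setminus\mbox{int}({\bf B}')$. Let $E\subset{\bf b}^+$ be the set of ingress points of ${\bf B}$ whose forward orbit meets ${\bf B}'$, and send $x\in E$ to the first point $\alpha(x)$ at which $\psi(x,\cdot)$ meets $\partial{\bf B}'$; transversality of the flow to the boundary, which is part of the isolating-block hypothesis, makes $\alpha:E\to{\bf b}'^+$ smooth. Dually, flowing forward from ${\bf b}'^-$ to the first exit of ${\bf B}$ gives a smooth map $\beta$ into ${\bf b}^-$. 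Orbits asymptotic to ${\bf N}$ enter every neighborhood of ${\bf N}\subset\mbox{int}({\bf B}')$, so ${\bf a}^+\subset E$, and the flow along the stable set carries ${\bf a}^+$ to ${\bf a}'^+$ while $\beta$ carries ${\bf a}'^-$ to ${\bf a}^-$. On orbits that genuinely traverse ${\bf B}'$ one has the factorization $\Psi=\beta\circ\Psi'\circ\alpha$, where $\Psi,\Psi'$ are the crossing maps of Theorem \ref{thConleyEaston}; on the complementary ``transparent'' orbits that cross ${\bf B}$ without entering ${\bf B}'$, the map $\Psi$ is merely the passage through $P$, which stays away from ${\bf N}$ and hence is a diffeomorphism extending smoothly across its closure.

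To conclude I would argue that $\alpha$ and $\beta$ are diffeomorphisms from a neighborhood of ${\bf a}^+$ in ${\bf b}^+$ onto a neighborhood of ${\bf a}'^+$ in ${\bf b}'^+$ (respectively of ${\bf a}'^-$ onto ${\bf a}^-$), carrying asymptotic set to asymptotic set. Granting this, the factorization $\Psi=\beta\circ\Psi'\circ\alpha$ shows that a smooth extension of $\Psi'$ across ${\bf a}'^+$ produces one of $\Psi$ across ${\bf a}^+$ and conversely, since $\alpha,\beta$ are local diffeomorphisms of pairs; the transparent part requires no extension. Uniqueness of the extension is inherited because $\alpha,\beta$ are invertible near the asymptotic sets. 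Hence ${\bf B}$ is trivializable if and only if ${\bf B}'$ is, which together with the reduction step proves the lemma.

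The delicate point, and the main obstacle, is precisely this last step: verifying that $\alpha$ (and $\beta$) are genuine diffeomorphisms onto neighborhoods of the asymptotic sets sending ${\bf a}^+$ onto ${\bf a}'^+$. This requires that asymptotic orbits meet $\partial{\bf B}'$ transversally and that the first-hitting map is a local diffeomorphism there, together with a matching-of-domains argument guaranteeing that the non-asymptotic orbits, which may enter ${\bf B}$ and leave through ${\bf b}^-$ without ever reaching ${\bf B}'$, do not interfere in a neighborhood of ${\bf a}^+$. Some care is also needed because an orbit need not remain in ${\bf B}'$ after its first entry, so the identification of ${\bf a}^+$ with ${\bf a}'^+$ is most cleanly made along the local stable manifold rather than by naive first entry. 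Conceptually, all of this expresses the fact that the only obstruction to trivializability lives in the flow on the local stable and unstable manifolds of ${\bf N}$, which is intrinsic to ${\bf N}$ and independent of the chosen block.
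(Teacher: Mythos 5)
First, a point of comparison: the paper does not prove this lemma at all --- it is quoted from Conley's CBMS monograph \cite{Conley} --- so your proposal has to stand on its own as a complete argument. Its overall architecture is reasonable and standard: the reduction to nested blocks is sound, since $\mbox{int}({\bf B}_1)\cap \mbox{int}({\bf B}_2)$ is an isolating neighborhood of ${\bf N}$, and the Conley--Easton existence theorem (or Lemma \ref{Wilson} applied to a Wilson--Yorke function) produces a block ${\bf B}_\delta$ inside it; chaining two nested-case equivalences would then give the lemma. The problems are all inside the nested case.

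There are two genuine gaps there. First, a concretely false step: you assert that ``transversality of the flow to the boundary \ldots is part of the isolating-block hypothesis.'' It is not. The definition of an isolating block explicitly allows a tangency set ${\bf t}$, requiring only that ${\bf t}={\bf b}^+\cap{\bf b}^-$ (external tangencies). Consequently your first-hitting map $\alpha$ onto $\partial{\bf B}'$ is not automatically smooth, or even continuous: an orbit entering ${\bf B}$ may graze $\partial{\bf B}'$ at a point of ${\bf t}'$ before actually entering ${\bf B}'$, and nearby orbits then split into those entering at the graze and those entering much later, so both the first-hitting and first-entry maps jump. Showing that orbits starting near ${\bf a}^+$ enter ${\bf B}'$ transversally with no prior grazing requires an argument (or a special choice of ${\bf B}'$) that you do not supply. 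Second, and decisively, the step you yourself flag as ``the delicate point'' --- that $\alpha$ and $\beta$ are diffeomorphisms between neighborhoods of the asymptotic sets carrying ${\bf a}^+$ onto ${\bf a}'^+$ and ${\bf a}'^-$ onto ${\bf a}^-$ --- is essentially the entire content of the lemma: granted it, the factorization $\Psi=\beta\circ\Psi'\circ\alpha$ transfers extensions routinely; without it, nothing has been proved about extending $\Psi$ across ${\bf a}^+$. Moreover this step is not a mere technicality to be checked. As you note, first entry does not map ${\bf a}^+$ into ${\bf a}'^+$, because an orbit asymptotic to ${\bf N}$ may enter and leave ${\bf B}'$ many times; and your proposed repair, working ``along the local stable manifold,'' is unavailable in this generality: ${\bf N}$ is only assumed to be a compact isolated invariant set, not a hyperbolic one, so its stable set ${\bf a}^+$ need not be a manifold and no stable manifold theorem applies (indeed, in this very paper ${\bf N}$ is a torus of degenerate equilibria whose linearization has a zero eigenvalue). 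What you have, then, is a correct reduction plus a restatement of the difficulty, not a proof.
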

\begin{definition}
 Let ${\bf B}$ isolate ${\bf N}$. Then {\bf N} is said to be {\it trivializable} if ${\it B}$ is trivializable. 
\end{definition}

We now return to the generalized Kepler problems on spherical surfaces of revolutions. We take $\psi$ to be the flow on the manifold ${\bf M}(h)$ defined by equations (\ref{eqenergy1}). The invariant set ${\bf N}$ is given by (\ref{eqcollmanifold}).  
Define 
\[\begin{split}
&I:\M(h)\rightarrow \R:(r,v,\theta,u)\rightarrow \frac{1}{|\Theta(r)|} \\
&\B(h,\delta)=\{x\in \M(h):I(\bf x)\leq \delta\}
\end{split}
\]

\begin{lemma}
Given any $h$, there exists a $\delta_0>0$ such that $\B(h,\delta)$ is an isolating block for ${\bf N}$  whenever $0<\delta\leq \delta_0$.
\label{lemmaMcGehee}
\end{lemma}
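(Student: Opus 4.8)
The plan is to verify the hypotheses of the Wilson--Yorke Lemma \ref{Wilson} for the function $I=1/|\Theta(r)|$, working with the rescaled flow of system (\ref{system2}) rather than (\ref{system1}). Since the time change $d\tau=\frac{\sqrt{|\Theta(r)|}}{f(r)}\,dt$ is orientation preserving away from $\N$ and extends the vector field smoothly across $\N$, the two flows have the same oriented orbits, hence the same ingress/egress/tangency structure, so an isolating block for one is an isolating block for the other. First I would record that near the north pole $\Theta(r)<0$, so $I=-1/\Theta$ depends on $r$ alone and $I\to0$ exactly as $r\to r_N$. A crucial preliminary observation is that $\N$ is the \emph{only} component of $I^{-1}(0)$ meeting $\M(h)$: although $|\Theta(r)|\to\infty$ at both poles, the energy relation (\ref{eqenergy}) gives $u^2+v^2=2\gamma+2hI$ near $r_N$ but $u^2+v^2=2hI-2\gamma$ near $r_S$, and the latter is negative for small $I$. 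Thus a neighborhood of the repulsive south pole is energetically forbidden, and once $\delta_0$ is small the set $\M(h)\cap\{I\le\delta_0\}$ lies entirely near $r_N$, so $\B(h,\delta)$ is genuinely a neighborhood of $\N$ alone.

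Next I would compute the derivatives of $I$ along the flow. Using $\Theta'(r)=1/f(r)^2$ and $\dot r=vf(r)$ from (\ref{system2}), a direct computation gives
\beq
\dot I=\frac{v}{f(r)\Theta(r)^2},
\eeq
so that on the region $0<I\le\delta_0$, where $f\neq0$ and $\Theta$ is finite, the condition $\dot I=0$ is equivalent to $v=0$. For the first hypothesis, $DI\neq0$, I note that $dI=\frac{1}{f^{2}\Theta^{2}}\,dr$ as a form on $\R^4$, and that its restriction to $T\M(h)$ can vanish only where the $\R^4$-gradient of the energy function is parallel to $\pd_r$, i.e.\ only where $u=v=0$; since $u^2+v^2=2\gamma+2hI>0$ for small $I$, this never occurs and $DI\neq0$ throughout $0<I\le\delta_0$.

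The heart of the proof is the second hypothesis: $\ddot I>0$ wherever $0<I<\delta_0$ and $\dot I=0$. Since $\dot I=0$ forces $v=0$, differentiating $\dot I=v\cdot(f\Theta^2)^{-1}$ once more and evaluating at $v=0$ annihilates the term containing $\frac{d}{d\tau}(f\Theta^2)^{-1}$, leaving
\beq
\ddot I\big|_{v=0}=\frac{\dot v}{f(r)\Theta(r)^2},\qquad \dot v\big|_{v=0}=u^2 f'(r)-\frac{\gamma}{f(r)|\Theta(r)|}.
\eeq
As $f\Theta^2>0$ on $(r_N,r_S)$, the sign of $\ddot I$ is that of $\dot v$, so it remains to show $\dot v>0$ near collision. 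Substituting $u^2=2\gamma+2hI$ from the energy relation and using Propositions \ref{propintegral}--\ref{proptheta2} to obtain the limits $f'(r)\to|\beta|$ and $1/(f|\Theta|)\to|\beta|$ as $r\to r_N$ (since $f\Theta\to-k/\beta^2=-1/k$), one finds $\dot v\to 2\gamma|\beta|-\gamma|\beta|=\gamma|\beta|>0$. By continuity there is a $\delta_0>0$, also small enough to keep $u^2+v^2>0$ when $h<0$, with $\dot v>0$ and hence $\ddot I>0$ on $\{v=0,\ 0<I<\delta_0\}$. Lemma \ref{Wilson} then gives that $\N=I^{-1}(0)$ is isolated and that $\B(h,\delta)=I^{-1}([0,\delta])$ is an isolating block for every $\delta\in(0,\delta_0]$.

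I expect the main obstacle to be the second-derivative computation together with the bookkeeping needed to evaluate $\dot v$ at collision: one must eliminate $u^2$ via the energy relation and invoke the limiting identities $f'(r_N)=|\beta|$ and $f|\Theta|\to 1/|\beta|$ coming from Proposition \ref{propintegral}. A secondary but essential point, easy to overlook, is the transversality argument for $DI\neq0$ and the observation that the energy constraint automatically excises the repulsive south pole, which is what confines $\B(h,\delta)$ to a neighborhood of $\N$.
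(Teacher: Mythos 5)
Your proof is correct and follows essentially the same route as the paper: both apply the Wilson--Yorke Lemma \ref{Wilson} to $I=1/|\Theta(r)|$, use the energy relation to rule out $(u,v)=(0,0)$ so that $DI\neq 0$, and show $\ddot I>0$ at $\dot I=0$ (i.e.\ $v=0$) by substituting $u^2=2\gamma+2hI$ and invoking $f'(r_N)=|\beta|$ together with Proposition \ref{propintegral}. Your explicit remark that the energy relation excises the south pole, so that $I^{-1}(0)\cap\M(h)=\N$, is a small detail the paper leaves implicit, but it does not change the argument.
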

\begin{proof}
The tangent space to $\M(h)$ at the point $x=(r,v,\theta,u)$ is given by
\[\{\frac{h}{f(r)^2\Theta(r)^2}\dot r+ v\dot v+u\dot u\}\]
provided $(\frac{h}{f(r)^2\Theta(r)^2},v,0,u)\neq(0,0,0,0)$. Let $\dot x=(\dot r,\dot v,\dot \theta,\dot u)$. Since $DI(x)\cdot \dot x=\frac{\dot r}{f(r)^2\Theta(r)^2}$, we have that $DI(x)\neq 0$ if $(u,v)\neq(0,0)$.  The energy relation implies that 
\[
|\sgn(\Theta(r))(u^2+v^2)+2\gamma|=2|h|\delta
\]
where $I(\delta)=\delta>0$. Therefore if $\delta_0$ is small enough (i.e. $\delta_0<\gamma/|h|$), then $(u,v)\neq(0,0)$ whenever $\delta\leq \delta_0$. Hence $DI(x)\neq 0$ when $0<I(x)\leq \delta_0$. Now using equation (\ref{system2}) we see that
\[
\dot I=-\frac{\sgn(\Theta(r))}{f(r)\Theta(r)^2}v=\frac{1}{f(r)\Theta(r)^2}v\]
where the last equality follows from the fact that near the north pole $\Theta(r)<0$.
Moreover (again assuming that $\Theta(r)<0$, since we are near the north pole)
 \[\ddot I=\frac{1}{f(r)\Theta(r)^2}\left [u^2f'(r)-v^2\left(\frac{5/2}{f(r)\Theta(r)}+f'(r)\right)+\frac{\gamma}{f(r)\Theta(r)}\right].
\]
If $I(x)=\delta$ and if $\dot I(x)=0$, then $|\theta(r)|^{-1}=\delta$ and $v=0$. Using the energy relation and Proposition \ref{propintegral} yields,
\[\begin{split}
\ddot I &=\frac{2}{f(r)\Theta(r)^2}\left(f'(r)h\delta+\gamma\left(f'(r)+\frac{1}{2f(r)\Theta(r)}\right)\right)\\
&=\frac{2}{f(r)\Theta(r)^2f'(r)}\left[(h\delta+\gamma)f'(r)^2-\gamma\frac{\beta^2}{2}\right].
\end{split}
\]
Since $0<\delta\leq\delta_0<\gamma/|h|$ we have that $h\delta+\gamma>0$ and the expression in square brakets is a convex parabola in $f'(r)$.
Moreover, since  $f'(r_N)=|\beta|$, it follows that,  if $\delta_0$ is sufficientely small, 
then $\left[(h\delta+\gamma)f'(r)^2-\gamma\frac{\beta^2}{2}\right]>0$ and $f'(r)>0$.
Consequently $\ddot I> 0$ whenever $0<\delta\leq\delta_0$.  Finally, note that ${\bf N}=I^{-1}(0)$. Hence, by Lemma \ref{Wilson}, ${\bf B}(h,\delta)$ is an isolating block for ${\bf N}$. 
\end{proof}

We now exhibit for the block ${\bf B}(h,\delta)$  the various subsets defined above.
We fix $h$ and choose $0<\delta\leq \delta_0$, where $\delta_0$ is given in Lemma  \ref{lemmaMcGehee}.
\[\begin{split}
 &{\bf b}=\{x\in {\bf M}(h)|~ \Theta(r)=\delta\}\\
  & {\bf b}^+=\{x\in {\bf b}|~ v\leq 0\}\\
 &{\bf b}^-=\{x\in {\bf b}|~ v\geq 0\}\\
 &{\bf t}=\{x\in {\bf b}|~ v=0\}\\
 &{\bf a}^+=\{x\in {\bf b}^+|~ u=0\}\\
 &{\bf a}^-=\{x\in {\bf b}^-|~ u=0\}\\
  \end{split}
\]


\begin{theorem}
 If the set ${\bf N}$ is a trivializable isolated invariant set for equations (\ref{system2}) then  $|\beta|=2/m$, where $m$ is a positive integer.
\end{theorem}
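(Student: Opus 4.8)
The plan is to prove the contrapositive by localizing the only possible failure of trivializability at the asymptotic circle ${\bf a}^{+}$ and reading off an arithmetic condition from the flow on ${\bf N}$. First I would record the hyperbolic structure of the two rest circles. Reading the eigenvalues in (\ref{eqeigenvalues}) and using $k=|\beta|$, the circle ${\bf S}^{-}$ (with $v=-\sqrt{2\gamma}$) has, inside the three dimensional energy manifold ${\bf M}(h)$, one stable direction transverse to ${\bf N}$, one unstable direction tangent to ${\bf N}$, and the neutral direction along the circle; symmetrically ${\bf S}^{+}$ has one transverse unstable direction and one tangential stable direction. Consequently ${\bf a}^{+}=\{x\in{\bf b}^{+}:u=0\}$ is exactly the trace on ${\bf b}$ of the stable manifold of ${\bf S}^{-}$ (the orbits that actually run into collision), and ${\bf a}^{-}=\{x\in{\bf b}^{-}:u=0\}$ is the trace of the unstable manifold of ${\bf S}^{+}$ (the ejection orbits). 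Since $\Psi$ is already a diffeomorphism on ${\bf b}^{+}-{\bf a}^{+}$ by Theorem \ref{thConleyEaston}, the sole obstruction to trivializability is whether $\Psi$ admits a continuous extension across the interior circle ${\bf a}^{+}$, and any such extension must carry ${\bf a}^{+}$ onto ${\bf a}^{-}$.

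The core step is a shadowing argument across the collision manifold. Fix a point $p\in{\bf a}^{+}$ with angular coordinate $\theta^{*}$, and let $x\to p$ inside ${\bf b}^{+}-{\bf a}^{+}$. Because $p$ lies on the stable manifold of ${\bf S}^{-}$, the transit time $T(x)$ tends to $+\infty$, and by continuous dependence on initial data the orbit $\psi(x,[0,T(x)])$ converges, after reparametrization, to the concatenation of the collision orbit entering the rest point of ${\bf S}^{-}$ at $\theta^{*}$, a heteroclinic orbit lying in ${\bf N}$ that joins ${\bf S}^{-}$ to ${\bf S}^{+}$, and the ejection orbit leaving ${\bf S}^{+}$ and meeting ${\bf b}^{-}$ near ${\bf a}^{-}$. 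The circle ${\bf a}^{+}$ separates ${\bf b}^{+}-{\bf a}^{+}$ into the two half annuli $\{u>0\}$ and $\{u<0\}$, and the sign of $u$ selects the branch of the one dimensional unstable manifold of ${\bf S}^{-}$ that is shadowed: approaching through $\{u>0\}$ follows the branch on which $\theta$ increases, approaching through $\{u<0\}$ the branch on which $\theta$ decreases.

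To evaluate the two one sided limits I would use the reduced flow on ${\bf N}$ established in the previous section, $d\chi/d\theta=k-\beta^{2}/(2k)=|\beta|/2$. Along a heteroclinic from ${\bf S}^{-}$ ($\chi=-\pi/2$) to ${\bf S}^{+}$ ($\chi=\pi/2$) the variable $\chi$ sweeps $\Delta\chi=\pi$, so the accumulated displacement in $\theta$ is $\Delta\theta=\Delta\chi/(|\beta|/2)=2\pi/|\beta|$, with sign fixed by the branch. Hence $\Psi(x)$ tends to the point of ${\bf a}^{-}$ at angle $\theta^{*}+2\pi/|\beta|$ when $x\to p$ through $\{u>0\}$, and to the point at angle $\theta^{*}-2\pi/|\beta|$ when $x\to p$ through $\{u<0\}$. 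A continuous (a fortiori differentiable) extension of $\Psi$ across ${\bf a}^{+}$ forces these two limits to coincide in ${\bf a}^{-}\cong S^{1}$, i.e. $\theta^{*}+2\pi/|\beta|\equiv\theta^{*}-2\pi/|\beta|\pmod{2\pi}$, which reduces to $4\pi/|\beta|\in 2\pi\mathbb{Z}$, that is $|\beta|=2/m$ for a positive integer $m$. This matches the two regular cases isolated earlier: $m=2n$ gives $|\beta|=1/n$ with both limits at $\theta^{*}$, and $m=2n+1$ gives $|\beta|=2/(2n+1)$ with both limits at $\theta^{*}+\pi$.

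The step I expect to be the main obstacle is making the shadowing limit, and in particular the exact value $\Delta\theta=2\pi/|\beta|$, fully rigorous, since ${\bf S}^{\pm}$ are not isolated hyperbolic equilibria but circles of partially hyperbolic rest points carrying a zero eigenvalue along the circle. The clean remedy is to treat ${\bf S}^{\pm}$ as normally hyperbolic invariant manifolds and to invoke an inclination ($\lambda$) lemma, which yields $C^{1}$ convergence of the orbit segments to the singular connection and thereby pins down the entry angle as $\theta^{*}+o(1)$ and the exit angle as $\theta^{*}\pm 2\pi/|\beta|+o(1)$. One must also check that the passages through the neighborhoods of the two rest circles add no extra winding in $\theta$ in the limit; this holds because $\dot\theta=u\to 0$ there, so the finite corner time contributes a vanishing increment. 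Granting this, the continuity obstruction is genuine and produces exactly $|\beta|=2/m$.
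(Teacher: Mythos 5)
Your proposal is correct and is essentially the paper's own argument: both proofs reduce the question to the angular shift of the across-the-block map at $u=0$, compute the heteroclinic displacement $\Delta\theta = 2\pi/|\beta|$ on the collision manifold from $d\chi/d\theta = |\beta|/2$ by shadowing the connection ${\bf S}^-\to{\bf S}^+$, and then impose continuity modulo $2\pi$. Your two one-sided limits $\theta^*\pm 2\pi/|\beta|$ are exactly the paper's $\Gamma(0\pm)$ together with its symmetry $\Gamma(-u)=-\Gamma(u)$, and the shadowing step you flag as needing an inclination-lemma justification is treated at the same informal level in the paper itself.
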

\begin{proof}
Using the definition of ${\bf B}$, we write
\[
{\bf b}=\{(r,v,\theta,u)\in \R^4| \frac{1}{|\Theta(r)|}=\delta \mbox{ and } u^2+v^2=2\gamma+2h\delta\}.
\]
The general form of the map across the block is 
\[
\Psi:{\bf b}^+-{\bf a}^+\rightarrow {\bf b}^--{\bf a}^-:(r,v,\theta,u)\rightarrow(\bar r,\bar v,\bar\theta,\bar u)
\]
we will use the first integral of the system to simplify this expression. 
Suppose $(r,v,\theta,u)\in {\bf b}^+-{\bf a}^+$ and let $\Psi(r,v,\theta,u)=(\bar r,\bar v,\bar\theta,\bar u)$. 
Recall that the energy $h(r,v,\theta,u)$, the angular momentum $c(r,v,\theta,u)$, $I_1(r,v,\theta,u)$ and $I_2(r,v,\theta,u)$ are first integrals of the problem.
Then $h(r,v,\theta,u)=h(\bar r,\bar v,\bar\theta,\bar u)$, $I_1(r,v,\theta,u)=I_1(\bar r,\bar v,\bar\theta,\bar u)$, $I_2(r,v,\theta,u)=I_2(\bar r,\bar v,\bar\theta,\bar u)$ and $c(r,v,\theta,u)=c(\bar r,\bar v,\bar\theta,\bar u)$. 
Since $r$ is constant in ${\bf b}$ it follows that $\bar r=r$ and  the integral of the angular momentum yields $\bar u=u$. From the energy integral and the definition of ${\bf b}^+$ and ${\bf b}^-$ it follows that $\bar v=-v$.
Therefore we can write the map across the block as
\beq
\Psi:{\bf b}^+-{\bf a}^+\rightarrow {\bf b}^--{\bf a}^-:(r,v,\theta,u)\rightarrow(r,-v,\Psi_\theta(\theta,u),u)
\eeq
where $\Psi_\theta(\theta,u)$ is the third component of $\Psi$.
Here we are using $(\theta,u)$  as  coordinates on ${\bf b}^+$, so $1/|\Theta(r)|=\delta$
and $v=-(2\gamma+2h\delta-u^2)^{1/2}$. 
Since the equations (\ref{system2}) are independent of $\theta$
\[
\Psi_\theta(\theta,u)=\theta+\Gamma(u).
\]
The function $\Gamma $ is defined for all $u$ such that $0<u^2\leq 2\gamma +2h\delta $. By symmetry,
\beq
\Gamma(-u)=-\Gamma(u).
\label{eqsymm}
\eeq
Assume that ${\bf B}$ is trivializable. Then $\Psi$ extends to a continuous map ${\bf b}^+\rightarrow {\bf b}^-$. Thus
\[
\theta+\Gamma(0+)=\theta +\Gamma(0-)+2\pi m.
\]
where $m$ is an integer. By (\ref{eqsymm}), $\Gamma(0-)=-\Gamma(0+)$. Hence we must have
\[
\Gamma(0+)=\pi m.
\]

The number $\Gamma(0+)$ can be computed using geometric methods.
Consider a point $x_0\in {\bf a}^+$.  The orbit through $x_0$ is the stable manifold of a point ${\bf s}^-$ in ${\bf S}^-$. Now let $x\in {\bf b}^+$ be close to $x_0$. the orbit through $x$ follow closely the stable manifold of ${\bf s}^-$.

In Section 7 we studied the flow on ${\bf N}$ where we found out that the unstable manifolds of points on ${\bf S}^-$ are just stright lines with slope $|\beta|/2$. 
We are interested in the branch of the unstable manifold of ${\bf s}^-$ for which $u\geq0$. Therefore we take $-\pi/2\leq\chi\leq \pi/2$.
Write
\[
{\bf s}^-=(\bar\delta,\theta^*,0,-\sqrt{2\gamma}),
\]
where $1/|\Theta(\bar\delta)=\delta|$. Then the unstable manifold of ${\bf s}^-$ is exactly the stable manifold of the point
\[
{\bf s}^+=(\bar\delta,\theta^*+2\pi/|\beta|,0,\sqrt{2\gamma}).
\]

We now can determine $\Gamma(0+)$. The orbit through $x$ first follows the stable manifold of ${\bf s}^-$, then follows the unstable manifold of ${\bf s}^-$, which coincides with the stable manifold of ${\bf s}^+$, and finally follows the unstable manifold of ${\bf s}^+$.
Therefore, as $x\rightarrow x_0$, the change in $\theta$ along the orbit approaches the difference in $\theta$ between ${\bf s}^-$ and ${\bf s^+}$. This difference is $2\pi/|\beta|$.
Hence $\Gamma(0+)=2\pi/|\beta|$. Since we also have that $\Gamma(0+)=\pi m$ we obtain
$|\beta|=2/m$. Since $|\beta|\geq 0$, $m$ is positive. This completes the proof.  
\end{proof}

We now want to prove that ${\bf N}$ is a trivializable isolated invariant set for equations (\ref{system2}). In order to do that, unlike  in the work of McGehee \cite{McGehee}, we will use the additional integrals of motion of the system. 


\begin{theorem}
 If  $|\beta|=2/m$, where $m$ is a positive integer, then  the set ${\bf N}$ is a trivializable isolated invariant set for equations (\ref{system2}).
\end{theorem}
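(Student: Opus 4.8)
The plan is to establish trivializability directly. By Lemma~\ref{lemmaMcGehee} the block $\B(h,\delta)$ already isolates $\N$, so $\N$ is an isolated invariant set and the only thing left is to show that $\B(h,\delta)$ is trivializable, i.e.\ that the map across the block $\Psi$ extends to a diffeomorphism $\mathbf{b}^+\to\mathbf{b}^-$. I would start from the reduced form obtained in the proof of the preceding theorem,
\[
\Psi:(r,v,\theta,u)\mapsto (r,-v,\theta+\Gamma(u),u),
\]
where $\bar r=r$ and $\bar u=u$ are forced by the constancy of $r$ on $\mathbf{b}$ and by conservation of the angular momentum, $\bar v=-v$ by the energy relation, and $\Gamma$ is defined for $0<u^2\le 2\gamma+2h\delta$ with $\Gamma(-u)=-\Gamma(u)$ and $\Gamma(0+)=2\pi/|\beta|$. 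Since $\theta$ is read modulo $2\pi$, the whole problem reduces to extending $u\mapsto\Gamma(u)$ smoothly across $\mathbf{a}^+=\{u=0\}$.

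Next I would make $\Gamma$ explicit, which is exactly where the extra integrals do the work that was unavailable to McGehee. Evaluating the trajectory equation of Section~5 on $\mathbf{b}$, where $\rho=-\Theta(r)$ has a fixed value $\rho_0$, the two crossings of $\mathbf{b}$ by a near--collision orbit are symmetric about $\theta=\theta_0$, and this yields
\[
\Gamma(u)=\sgn(u)\,\frac{2}{|\beta|}\arccos\!\left(\frac{p\rho_0-1}{e}\right),
\]
where $p=\beta^2c^2/\gamma$ and $e$ are written through $c=p_\theta=u\,f(r)\sqrt{|\Theta(r)|}$ and $h$ by means of~(\ref{eqdependent}). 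Equivalently, I can read this off from the invariance of $I_1,I_2$ under $\Psi$, which gives $\tan(\tfrac12\beta\Gamma)=P/R$ with $P=|\Theta|f\,uv$ and $R=\beta\Theta|\Theta|f^2u^2+\gamma/\beta$; in that form the integer winding count must then be pinned down by continuity against the flow on $\N$ of Section~7, which is precisely what the closed formula builds in automatically. The argument of the arccosine is even in $u$ and tends to $-1$ as $u\to0$.

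The decisive step is the local analysis at $u=0$. Using $\delta<\gamma/|h|$ one checks that $1+\frac{p\rho_0-1}{e}=Cu^2+O(u^4)$ with $C>0$, and the expansion $\arccos(-1+s)=\pi-\sqrt{2s}+O(s^{3/2})$ then gives, in the case $|\beta|=2/m$,
\[
\Gamma(u)=\sgn(u)\,\pi m-\frac{2}{|\beta|}\,u\,B(u^2),
\]
with $B$ smooth and even, so that the second term is a smooth odd function of $u$. Reducing modulo $2\pi$, the jump $\sgn(u)\,\pi m$ collapses to the single constant $\pi m \bmod 2\pi$ on both sides of $u=0$ precisely because $m\in\mathbb{Z}$, the non-smooth $|u|$ cancels against $\sgn(u)$, and what survives,
\[
(\theta,u)\longmapsto\Big(\theta+\pi m-\tfrac{2}{|\beta|}\,u\,B(u^2),\,u\Big),
\]
is smooth across $u=0$, has Jacobian determinant $1$, and is a global shear in $\theta$; hence it is a diffeomorphism $\mathbf{b}^+\to\mathbf{b}^-$. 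Since it agrees with $\Psi$ on the dense set $u\ne0$, the extension is unique, so $\B(h,\delta)$, and therefore $\N$, is trivializable.

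The hard part will be this smoothness argument at $u=0$: the arccosine contributes a non-smooth $|u|$ term, and it is only its combination with the orientation factor $\sgn(u)$, together with the $2\pi$-periodicity of $\theta$ and the integrality of $m$, that converts the half--circuit discrepancy $\sgn(u)\,\pi m$ into a genuine analytic gluing rather than a corner. The companion subtlety is the bookkeeping of the winding integer, so that the explicit $\Gamma$ really represents the total change of $\theta$ through the block and not merely its value modulo a period; controlling this via $I_1,I_2$ is exactly the advantage over a flow-only argument.
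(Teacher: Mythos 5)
Your proposal is correct, and it rests on the same pillar as the paper's proof, namely superintegrability: both arguments use the generalized Runge--Lenz integrals $I_1,I_2$ to produce an explicit formula for the angle advance $\Gamma(u)$ of the map across the block, and in both the hypothesis $|\beta|=2/m$ enters at exactly the same point, namely that a jump of size $2\pi m$ at $u=0$ is invisible because $\theta$ lives on a circle. The difference is in how $\Gamma$ is packaged, and it is not merely cosmetic. The paper writes $\Gamma(u)=-2\zeta(u)-\pi/\beta+2k\pi/\beta$, where $\zeta=\frac{1}{\beta}\arccos(\cdot)$ has its argument passing through $0$ (an interior point of the domain of $\arccos$) at $u=0$; hence $\zeta$ is automatically $C^\infty$ and the whole difficulty is shifted to choosing the winding integer $k$ separately on $u>0$ and $u<0$, which the paper does by matching against $\Gamma(0+)=2\pi/|\beta|$, a value imported from the geometric study of the flow on ${\bf N}$ (proof of Theorem 3). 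Your conic-equation formula $\Gamma(u)=\sgn(u)\,\frac{2}{|\beta|}\arccos\bigl((p\rho_0-1)/e\bigr)$ has the winding built in (inside the block the orbit lies on a single arc of the conic around one maximum of $\rho$, and $\theta$ is monotone because $\sgn\dot\theta=\sgn u$ is constant along the orbit), so no appeal to the collision-manifold flow is needed; the price is that the argument of $\arccos$ hits the endpoint $-1$ as $u\to 0$, and you must resolve the square-root singularity there. Your resolution is sound: since $p$ and $e$ depend on $u$ only through $c^2\propto u^2$, and since $\gamma+h\delta>0$ by the choice of $\delta_0$ in Lemma \ref{lemmaMcGehee}, one indeed gets $1+(p\rho_0-1)/e=u^2G(u^2)$ with $G$ smooth and $G(0)>0$, whence $\arccos(-1+s)=\pi-|u|\,B(u^2)$ with $B$ smooth, and $\sgn(u)|u|=u$ turns the singular part into a smooth odd function while $\sgn(u)\,\pi m\equiv \pi m \pmod{2\pi}$ precisely because $m\in\mathbb{Z}$. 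In short: the paper buys immediate smoothness at the cost of a piecewise normalization of the winding integer; you buy an intrinsic winding count at the cost of an endpoint expansion of $\arccos$. Both close the argument the same way, since a smooth shear in $\theta$ is manifestly a diffeomorphism of ${\bf b}^+$ onto ${\bf b}^-$.
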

\begin{proof}
Recall that 
\[
{\bf b}=\{(r,v,\theta,u)\in \R^4| \frac{1}{|\Theta(r)|}=\delta \mbox{ and } u^2+v^2=2\gamma+2h\delta\}.
\]
 
Suppose $(r,v,\theta,u)\in {\bf b}^+-{\bf a}^+$ and let $\Psi(r,v,\theta,u)=(\bar r,\bar v,\bar\theta,\bar u)$. 
Then, in order to show that ${\bf N}$ is trivializable, we must prove that the map across the block
\beq
\Psi:{\bf b}^+-{\bf a}^+\rightarrow {\bf b}^--{\bf a}^-:(r,v,\theta,u)\rightarrow(\bar r,\bar v,\bar\theta,\bar u)
\eeq
extends to a diffeomorphism  of ${\bf b}^+$ onto ${\bf b}^-$.
Here, as shown in the proof of Theorem 3, $\Psi_\theta(\theta,u)$ is the third component of $\Psi$, and  we are using $(\theta,u)$  as  coordinates on ${\bf b}^+$, so $1/|\Theta(r)|=\delta$
and $v=-(2\gamma+2h\delta-u^2)^{1/2}$.

Since $I_1$ and $I_2$ are integrals of motions  we have
$I_1(r,v,\theta,u)=I_1(\bar r,\bar v,\bar\theta,\bar u)$ and, $I_2(r,v,\theta,u)=I_2(\bar r,\bar v,\bar\theta,\bar u)$. Therefore we have 
\[
J_1(r,v,\theta,u)=I_1\cos \beta\theta+I_2\sin\beta\theta=I\cos(\beta(\theta-\theta_0))=-\beta\Theta(r)|\Theta(r)|f(r)^2 u^2-\frac\gamma \beta
\]
and 
\[
J_2(r,v,\theta,u)=I_1\sin \beta\theta-I_2\cos\beta\theta=I\sin(\beta(\theta-\theta_0))=|\Theta(r)|f(r)uv,
\]
where 
\beq
I_1=I\cos\beta\theta_0, \quad\quad I_2=I\sin\beta\theta_0
\label{eqtheta0}\eeq and $I=\sqrt{I_1^2+I_2^2}$.
Consequently $J_1(r,v,\theta,u)=J_1(\bar r,\bar v,\bar\theta,\bar u)$ and $J_2(r,v,\theta,u)=-J_2(\bar r,\bar v,\bar\theta,\bar u)$. This gives the system of equations
\[
\begin{split}
 &I\cos(\beta(\theta-\theta_0))=I\cos(\beta(\bar\theta-\bar\theta_0))\\
&I\sin(\beta(\theta-\theta_0))=-I\sin(\beta(\bar\theta-\bar\theta_0))
\end{split}
\]
where $\bar\theta_0=\theta_0(\bar r,\bar v,\bar\theta,\bar u)$.
Since $\theta_0=\bar\theta_0$ the general solution of the system above is
\beq
\Psi_\theta=\bar\theta=-\theta+2\theta_0+\frac{2k\pi}{\beta}.
\label{eqpsitheta}
\eeq
We now want to write  $\Psi_\theta(\theta,u)$ in the form $\Psi_\theta(\theta,u)=\theta+\Gamma(u)$ used in the proof of Theorem 3.
First we can rewrite equations (\ref{eqlentz2}) as 
\beq
I_1= I\sin [\beta(\theta-\zeta)], \quad I_2=-I\cos[\beta(\theta-\zeta)],
\label{eqlentz3}\eeq
where $\cos(\beta\zeta)=(|\Theta(r)|f(r)uv)/I$ and $\sin(\beta\zeta)=(\beta\Theta(r)|\Theta(r)|f(r)^2u^2+\gamma/\beta)/I$. If we suppose 
 $(r,v,\theta,u)\in {\bf b}^+-{\bf a}^+$ then $1/|\Theta(r)|=\delta$ and if we use $(\theta,u)$ as coordinates on ${\bf b}^+$ we have
\beq\begin{split}
&\cos(\beta\zeta)=\frac{-k_1u(2\gamma+2h\delta-u^2)^{1/2}}{(k_1^2(2\gamma+2h\delta-u^2)u^2+(k_2u^2+\frac{\gamma}{\beta})^2)^{1/2}},\\ 
&\sin(\beta\zeta)=\frac{(k_2u^2+\frac{\gamma}{\beta})}{(k_1^2(2\gamma+2h\delta-u^2)u^2+(k_2u^2+\frac{\gamma}{\beta})^2)^{1/2}},
\end{split}\label{eqzeta}\eeq
where $k_1$ and $k_2$ are constants such that $k_1=|\Theta(r)|f(r)$ and $k_2=\beta\Theta(r)|\Theta(r)|f(r)^2u^2+\gamma/\beta$, when $1/|\Theta(r)|=\delta$.

Note that the right hand side of the second equation of system (\ref{eqzeta}) is always positive. It follows that $0<\beta\zeta<\pi$. Therefore the function $\zeta$ is defined by
\[
\zeta(u)=\frac 1 \beta \arccos\left( \frac{-k_1u(2\gamma+2h\delta-u^2)^{1/2}}{(k_1^2(2\gamma+2h\delta-u^2)u^2+(k_2u^2+\frac{\gamma}{\beta})^2)^{1/2}}\right)
\]
and it is of class $C^\infty$ since composition of $C^\infty$ functions (the inverse cosine function is of class $C^\infty$). 

From equations (\ref{eqtheta0}) and (\ref{eqlentz3}) we obtain  the following system of  trigonometric equations 
\[
\cos\beta\theta_0=\sin [\beta(\theta-\zeta)], \quad\quad\\
\sin\beta\theta_0=-\cos[\beta(\theta-\zeta)].\]

The general solution of the system above is 
$\theta_0=\theta-\zeta -\frac{\pi}{2\beta}+\frac{2n\pi}{\beta}$. We choose $n=0$ and we define $\theta_0$ as
\[
\theta_0=\theta-\zeta -\frac{\pi}{2\beta}.
\]
Substituting the above expression for $\theta_0$ in (\ref{eqpsitheta}) we obtain
\[
\Psi_\theta=\theta+\Gamma(u)=\theta-2\zeta-\frac{\pi}{\beta}+\frac{2k\pi}{\beta}
\]
where $\Gamma(u)=-2\zeta-\frac{\pi}{\beta}+\frac{2k\pi}{\beta}$, and $k$ is an integer to be determined.

We now need to determine the value of $k$ imposing continuity at $u=0$.
Recall that $\Gamma(0+)=\pi m=\frac{2\pi}{|\beta|}$. Since $\beta\zeta(0+)=\frac \pi 2$ we have
$\Gamma(0+)=\frac{2\pi}{\beta}(k-1)=\frac{2\pi}{|\beta|}$. Consequently, if $u>0$ 
$k=1+\frac{\beta}{|\beta|}$. Since $\Gamma(0-)=-\Gamma(0+)$ and $\Gamma(0-)=\frac{2\pi}{\beta}(k-1)=\frac{2\pi}{|\beta|}$ we find that, if $u<0$, $k=1-\frac{\beta}{|\beta|}$. 
Therefore we can define $\Gamma(u)$ as
\[
\Gamma(u)=\left \{
\begin{array}{l}
-2\zeta-\frac{\pi}{\beta}+\frac{2\pi}{\beta}\left(1-\frac{\beta}{|\beta|}\right)\quad \mbox{if } u<0 \\
-2\zeta-\frac{\pi}{\beta}+\frac{2\pi}{\beta}\left(1+\frac{\beta}{|\beta|}\right) \quad \mbox{if } u>0.
\end{array}\right.
\]

We can now show that the map across the block $\Psi$ extends  to a diffeomorphism $\tilde \Psi$ from ${\bf b}^+$ to ${\bf b}^-$. Let 
\[
\tilde\Psi:{\bf b}^+\rightarrow {\bf b}^-:(r,v,\theta,u)\rightarrow(r,-v,\theta+\tilde\Gamma(u),u),
\]
where 
\[
\tilde\Gamma(u)=\left \{
\begin{array}{l}
-2\zeta-\frac{\pi}{\beta}+\frac{2\pi}{\beta}\left(1-\frac{\beta}{|\beta|}\right)\quad \mbox{if } u\leq 0 \\
-2\zeta-\frac{\pi}{\beta}+\frac{2\pi}{\beta}\left(1+\frac{\beta}{|\beta|}\right) \quad \mbox{if } u>0
\end{array}\right.
\]
be the extended map. Such map is continuous by construction. Moreover, since $\zeta(u)$ is differentiable, $\tilde\Psi$ is a differentiable map $\tilde\Psi$. Moreover, let $(\bar r,\bar v,\bar\theta,\bar u)\in {\bf b}^-$, then $\tilde\Psi^{-1}:{\bf b}^-\rightarrow{\bf b}^+$ is defined as follows
\[
\tilde\Psi^{-1}(\bar r,\bar v,\bar\theta,\bar u)=(\bar r, \bar\theta-\tilde\Gamma(\bar u),\bar u,-\bar v)
\] 
is the inverse of the map  $\tilde\Psi$. The differentiability of $\tilde \Psi^{-1}$ follow immediately. Therefore $\Psi$ extends to a diffeomorphism, this concludes the proof. 
\end{proof}

\section{Block Regularization of the Vector Field}\label{regularization}
In the previous section we worked with system (\ref{system2}), for which ${\bf N}$ is an invariant set. 
 We now turn to our original problem: determining if equations (\ref{eqmotion}) can be regularized.  We consider the equivalent system (\ref{system1}). The set ${\bf N}$ given by (\ref{eqcollmanifold}) is the set of singularities of equations (\ref{system1}), i.e. it is the set where the vector field fails to be defined.  On ${\bf M}(h)-{\bf N}$ the orbits for the two sysyems are identical: only the parametrization is different. 
We now turn, following Easton, to the definition of block regularization. 

Let ${\bf M}$ be a smooth manifold, let ${\bf N}$ be a compact subset of ${\bf M}$, and let ${\bf F}$ be a vector field on ${\bf M-N}$, where, in  this section, ${\bf N}$ is the set of singularities of the vector field ${\bf F}$. Let $\phi$ be the flow on ${\bf M-N}$ given by ${\bf F}$ (we do not require  $\phi(x,t)$ to be defined for all $t$). 

Let ${\bf B}$ be a compact subset of ${\bf M}$ with non-empty interior, and suppose that ${b}=\partial{\bf  B}$ is a smooth manifold  which does not intersect ${\bf N}$. Let ${\bf b^+}, {\bf b^-}$ and ${\bf t}$  be defined as in the previous section. Let  the definition of isolating block be also the same. Let ${\mathcal O}(x)$ denote the orbit through $x$, namely
\[
{\mathcal O}(x)=\{\phi(x,t)|\phi(x,t) \mbox{ is defined}\}.
\] 

\begin{definition}
 An isolating block ${\bf B}$ is said to isolate a singularity set ${\bf N}$ if ${\bf N}\subset\mbox{int}({\bf B})$ and if ${\mathcal O}(x)\not\subset{\bf B}$ for all $x\in {\bf B-N}$.
\end{definition}
 The subsets ${\bf a}^+$ and ${\bf a}^-$ are the same as before, except that now we must allow for solutions which are not defined for all $t$. Therefore
\[
\begin{split}
 &  {\bf a}^+=\{x\in {\bf b}^+|\phi(x,t)\in{\bf B} \mbox{ for all } t\geq 0 \mbox{ for which } \phi(x,t) \mbox{ is defined}\},\\
 &{\bf a}^-=\{x\in {\bf b}^-|\phi(x,t)\in{\bf B} \mbox{ for all } t\leq 0 \mbox{ for which } \phi(x,t) \mbox{ is defined}\}.\\
  \end{split}
\]
The map $\Phi:{\bf b}^+-{\bf a}^+\rightarrow {\bf b}^-$ is defined in exactly the same way as the map $\Psi$ and we have that $\Phi:{\bf b}^+-{\bf a}^+\rightarrow {\bf b}^--{\bf a}^-$ is a diffeomorphism. We also have the same definition of a trivializable block ${\bf B}$. 
\begin{definition}
 The singularity set ${\bf N}$ is said to be {\it block regularizable} if there exists a trivializable block ${\bf B}$ which isolates ${\bf N}$.
\end{definition}

We now return to our original problem. Note that whether a certain set is or is not an isolating block is independent of the parametrization of the flow. The map across the block is also independent of the parametrization.  Therefore, ${\bf B}(h,\delta)$ is an isolating block for system (\ref{system1}) if and only if it is an isolating block for system (\ref{system2}), and $\Phi=\Psi$. Hence ${\bf B}(h,\delta)$ is trivializable for (\ref{system1}) if and only if it is trivializable for system (\ref{system2}).

\begin{theorem}
 The singularity set ${\bf N}$ (i.e. the singularity at the north pole) for system (\ref{system1}) is block regularizable if and only if $|\beta|=2/m$, where $m$ is a positive integer. 
\end{theorem}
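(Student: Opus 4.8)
The plan is to assemble the final statement from three ingredients already in hand: the explicit isolating block constructed in Lemma \ref{lemmaMcGehee}, the two one-directional implications of Theorems 3 and 4 relating trivializability of ${\bf N}$ to the arithmetic condition $|\beta|=2/m$, and the parametrization-independence of the notions of ``isolating block'' and ``trivializable'' recorded just before the statement. First I would unwind the definition of block regularizability: ${\bf N}$ is block regularizable for system (\ref{system1}) precisely when there exists a trivializable block that isolates the singularity set ${\bf N}$. The goal is therefore to show that such a block exists if and only if $|\beta|=2/m$.

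For the implication $|\beta|=2/m\Rightarrow{\bf N}$ block regularizable, I would take the explicit block ${\bf B}(h,\delta)$ of Lemma \ref{lemmaMcGehee}. It is an isolating block for ${\bf N}$ under the reparametrized system (\ref{system2}) and, by the parametrization-independence remark, it is equally an isolating block isolating the singularity set ${\bf N}$ for system (\ref{system1}). Since $|\beta|=2/m$, Theorem 4 yields that ${\bf N}$ is a trivializable isolated invariant set for (\ref{system2}), so ${\bf B}(h,\delta)$ is trivializable for (\ref{system2}); invoking $\Phi=\Psi$ it is then trivializable for (\ref{system1}) as well. This exhibits the required trivializable isolating block.

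For the converse, suppose ${\bf N}$ is block regularizable for (\ref{system1}), so that \emph{some} trivializable block ${\bf B}'$ isolates ${\bf N}$. Because being an isolating block and being trivializable do not depend on the time parametrization, ${\bf B}'$ is a trivializable block isolating the invariant set ${\bf N}$ for system (\ref{system2}). Here the key tool is Conley's Lemma, that trivializability is a property of the isolated invariant set and not of the particular block used to isolate it: it lets me conclude that ${\bf N}$ itself is a trivializable isolated invariant set for (\ref{system2}), independently of the unknown block ${\bf B}'$. Theorem 3 then forces $|\beta|=2/m$ with $m$ a positive integer, completing the equivalence.

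The main obstacle I anticipate is bookkeeping rather than new mathematics: one must carefully match the two distinct uses of the word ``isolate'', namely isolation of an \emph{isolated invariant set} (appropriate to the reparametrized flow of (\ref{system2}), where ${\bf N}$ is genuinely invariant) versus isolation of a \emph{singularity set} (appropriate to the original flow (\ref{system1}), where orbits need not be defined for all time). In particular, in the converse direction I must check that the orbit condition ${\mathcal O}(x)\not\subset{\bf B}'$ from the singularity-set definition translates into the maximal-invariant-set condition for system (\ref{system2}), so that Conley's Lemma genuinely applies to ${\bf B}'$; the parametrization-independence of the isolating-block property is exactly what legitimizes this transfer.
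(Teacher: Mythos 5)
Your proposal is correct and follows essentially the same route as the paper, which proves Theorem 5 by combining the parametrization-independence remark (isolating blocks and the map across the block, hence $\Phi=\Psi$, do not depend on the time reparametrization) with Lemma \ref{lemmaMcGehee} and Theorems 3 and 4. Your explicit appeal to Conley's Lemma in the converse direction, to pass from an arbitrary trivializable block ${\bf B}'$ to trivializability of the invariant set ${\bf N}$ itself, is exactly the role that lemma plays in the paper's (largely implicit) argument.
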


We can now quickly discuss the singularity set at the south pole for system (\ref{system1}). In this case the situation is much simpler and the singularity is always block regularizable.
 Let ${\bf N}_S$ be such singularity set, i.e. 
\[
{\bf N}_S=\{(r,\theta,u,v)\in \R^4|r=r_S\}
\]
then one can prove the following
\begin{theorem}
 The singularity set ${\bf N}_S$ for system (\ref{system1}) is block regularizable.
\end{theorem}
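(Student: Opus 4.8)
The plan is to exploit the fact that the south pole is a \emph{repulsive} singularity of the potential, so that—unlike the north pole—no solution of the flow ever reaches or accumulates on $r=r_S$. This makes the asymptotic sets of any isolating block empty, whereupon the map across the block is automatically a diffeomorphism requiring no extension, and no resonance condition of the type $|\beta|=2/m$ is needed.

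First I would fix the sign of $\Theta$ near the south pole. Since $\Theta(r)=-\frac{1}{L^2\sqrt{K}}\cot(\sqrt{K}r+\alpha)$ and $\sqrt{K}r_S+\alpha=\pi$, one has $\Theta(r)>0$ with $\Theta(r)\to+\infty$ as $r\to r_S^-$, in contrast to the north pole where $\Theta<0$. Substituting $\sgn(\Theta)=+1$ into the energy relation (\ref{eqenergy}) gives
\beq
\frac{u^2+v^2}{2}=\frac{h}{\Theta(r)}-\gamma .
\eeq
Because the left-hand side is nonnegative, this forces $\Theta(r)\le h/\gamma$; as $\Theta\to+\infty$ at $r_S$, the inequality must fail before $r_S$, so there is an $r^*<r_S$ with $r\le r^*$ throughout $\M(h)$. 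Hence $\N_S\cap\M(h)=\emptyset$: the south pole is an infinite energy barrier and no orbit attains it.

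Next I would manufacture an isolating block about $\N_S$ exactly as in Lemma \ref{lemmaMcGehee}, using $I_S(x)=1/\Theta(r)$ (whose zero set is $\N_S$) in place of $1/|\Theta(r)|$. The Wilson--Yorke hypotheses of Lemma \ref{Wilson} are verified by the same computation, the only changes being $\sgn(\Theta)=+1$ and $f'(r_S)=-\sqrt{K}L<0$; the convex-parabola estimate on $\ddot I_S$ again holds for small $\delta$ and produces a block $\B_S=\{I_S\le\delta\}$. The decisive point is that, since $r\le r^*<r_S$, any orbit entering the collar near $r_S$ reaches a turning point with $v=0$ and leaves, so none stays in $\B_S$ for all forward (or all backward) time; therefore the asymptotic sets ${\bf a}^+$ and ${\bf a}^-$ are both empty. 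Consequently the map across the block $\Phi\colon{\bf b}^+-{\bf a}^+\to{\bf b}^--{\bf a}^-$ is, by Theorem \ref{thConleyEaston}, already a diffeomorphism from all of ${\bf b}^+$ onto all of ${\bf b}^-$; it needs no extension, $\B_S$ is trivially trivializable, and $\N_S$ is block regularizable. I expect the only real obstacle to be conceptual rather than computational: since $\N_S$ is not actually met by $\M(h)$, one must set up the block (equivalently, the ambient manifold on which $I_S$ vanishes) so that the inclusion $\N_S\subset\mathrm{int}\,\B_S$ and the emptiness of ${\bf a}^\pm$ are compatible—once the repulsive energy barrier is established, the regularization is immediate.
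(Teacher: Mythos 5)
Your proposal is correct and follows essentially the same route as the paper's (sketched) proof: build an isolating block around the south pole as in Lemma~\ref{lemmaMcGehee}, use the energy relation to show no orbit reaches ${\bf N}_S$ so that ${\bf a}^{\pm}=\emptyset$, and conclude via Theorem~\ref{thConleyEaston} that the map across the block is already a diffeomorphism of ${\bf b}^+$ onto ${\bf b}^-$, hence trivializable. In fact you supply details the paper only gestures at (the sign of $\Theta$ near $r_S$, the bound $\Theta\le h/\gamma$, and the conceptual point that ${\bf N}_S$ is disjoint from ${\bf M}(h)$); the only slight overstatement is that the $\ddot I_S>0$ verification needs the convex-parabola argument—at the south pole both terms already have the favorable sign since $f'(r_S)<0$ and $\Theta>0$, so it is simpler than at the north pole.
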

\begin{proof}
We give a sketch of the proof. One can construct an isolating block repeating the proof  of Lemma \ref{lemmaMcGehee} with some minor changes. In this case the sets ${\bf a}^+$ and ${\bf a^-}$ are empty. This is a consequence of the fact that, as it can be seen from the energy relation (\ref{eqenergy}), no solution ever reaches the singularity set ${\bf N}_S$.
It follows from Theorem \ref{thConleyEaston} that the map across the block $\Psi:{\bf b^+}-{\bf a}^+\rightarrow {\bf b^-}-{\bf a}^-$ is a diffeomorphism. Moreover, since  ${\bf a}^+$ and ${\bf a^-}$ are empty, $\Psi$ trivially extends to a diffeomorphism   from ${\bf b^+}$ to ${\bf b^-}$. Therefore  the isolating block is trivializable.  
Moreover, as we noticed above, a region is an isolating block for system (\ref{system1}) if and only if it is an isolating block for (\ref{system2}), and $\Phi=\Psi$.
Since the isolating block can be constructed so that it isolates ${\bf N}_S$ it follows that ${\bf N}_S$ is block regularizable. 
  \end{proof}
\section{A Note About Orbifolds of Revolutions}

\begin{figure}[t]
  \begin{center}
      {\resizebox{!}{4.5cm}{\includegraphics{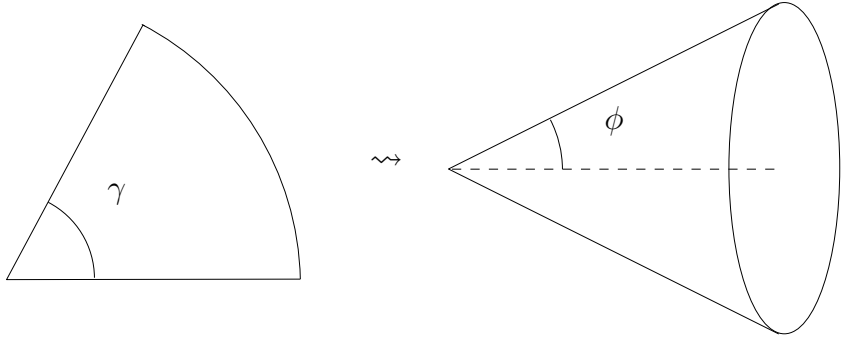}}  }
    \caption{Cone as quotient of a plane sector \label{figcone}} 
  \end{center}

\end{figure}
Spherical orbifolds of revolutions are easily identifiable by their tangent cones at the poles. Namely, the tangent cone at the pole must be isometric to the metric quotient of the flat plane $\R^2$ by a finite cyclic group of rotations fixing the origin. Note that the tangent cone at the pole is generated by rotating the tangent line to the profile curve at the pole about the axis of rotation. If the cyclic groups at the poles are of different orders, the orbifold is commonly referred as {\itshape bad} since it will not arise as a quotient of a Riemannian $S^2$ by a finite cyclic group of isometries \cite{Thurston}.

In general a flat right circular cone with vertex angle $\phi$ is obtained by identifying the edges of a plane circular sector of angle $\gamma$. The relation between $\phi$ and $\gamma$ is easily computed: $\gamma=2\pi\sin\phi$. See figure \ref{figcone}. Thus if the tangent cone at a pole of a spherical orbifold of revolution is isometric to $\R^2/{\mathbb Z}_n$, then $\gamma=2\pi/n$ for a positive integer $n$.

We are now in a position to prove the following.
\begin{theorem}
 The singularity set ${\bf N}$ for system (\ref{system1}) is block regularizable if the surface of revolution is an orbifold of revolution with constant curvature. 
\end{theorem}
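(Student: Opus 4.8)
The plan is to reduce the statement to the characterization already established, namely that ${\bf N}$ is block regularizable if and only if $|\beta|=2/m$ for some positive integer $m$. Thus it suffices to show that the orbifold hypothesis forces $|\beta|$ to be of this form. Since the orbifold condition is a purely geometric constraint on the profile curve near the north pole, the entire argument amounts to translating that constraint into an arithmetic condition on $\beta$ and then invoking the earlier theorem; no new dynamics on the collision manifold are needed.

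First I would compute the half-angle $\phi$ of the tangent cone at the north pole. Because the profile curve $\gamma(r)=(f(r),0,g(r))$ is unit speed, its tangent at $r=r_N$ is the unit vector $(f'(r_N),0,g'(r_N))$, whose component orthogonal to the axis of revolution is exactly $f'(r_N)$. Rotating this tangent line about the $z$-axis generates the tangent cone, whose slant side therefore makes an angle $\phi$ with the axis satisfying $\sin\phi=f'(r_N)$. By the computation preceding the study of the collision manifold (using Proposition \ref{propintegral}), $f'(r_N)=\sqrt{K}L=|\beta|$, so $\sin\phi=|\beta|$, and hence the flat metric on this cone has total sector angle $2\pi\sin\phi=2\pi|\beta|$.

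Next I would invoke the orbifold criterion of this section: if the surface is an orbifold of revolution, the tangent cone at the north pole is isometric to $\R^2/\mathbb{Z}_n$ for some positive integer $n$, so its sector angle equals $2\pi/n$. Comparing with the expression just obtained gives $2\pi|\beta|=2\pi/n$, i.e. $|\beta|=1/n$. Writing $1/n=2/(2n)$ shows that $|\beta|=2/m$ with $m=2n$ an (even) positive integer. By the characterization theorem of Section \ref{regularization}, it follows at once that ${\bf N}$ is block regularizable, completing the argument.

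The only delicate point is the geometric identification $\sin\phi=f'(r_N)$ together with the resulting value $2\pi|\beta|$ for the sector angle: one must be careful that $\phi$ denotes the half-angle between the axis and the slant side, so that the unrolling relation (sector angle $=2\pi\sin\phi$) applies, and that the unit-speed normalization makes the horizontal component of the tangent equal to $f'(r_N)$. Once these identifications are pinned down the arithmetic is immediate, and it is worth noting that the orbifold condition actually yields the stronger conclusion $|\beta|=1/n$, which is automatically of the regularizable form $2/m$ (the ``even $m$'' case in which near-collision orbits exit in the direction from which they approached).
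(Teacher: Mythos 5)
Your proof is correct and follows essentially the same route as the paper: identify $\sin\phi_N=f'(r_N)=|\beta|$, use the orbifold criterion (tangent cone $\cong\R^2/\mathbb{Z}_n$, sector angle $2\pi/n$) to get $|\beta|=1/n=2/(2n)$, and invoke the characterization $|\beta|=2/m$ from Theorem 5. Your treatment is in fact slightly more careful than the paper's on the geometric identification of the cone angle, but the argument is the same.
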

\begin{proof}
Let $\phi_N$, resp. $\phi_S$, be the angle between the profile curve $\gamma(r)=(f(r),0,g(r))$ and the axis of rotation at $r=r_N$, resp. $r=r_S$. 
For an orbifold of revolution we must have $\sin\phi_N=1/n$ and $\sin\phi_S=1/k$ for some positive integers $n$ and $k$. 

On the other hand $\sin\phi_N=f'(r_N)=|\beta|$. By Theorem 5 the singularity ${\bf N}$ for system (\ref{system1})  is block regularizable if and only if $|\beta|=2/m$, where $m$ is a positive integer.  

Therefore for an orbifold of revolution we have $m=2n$. A similar argument also shows $m=2k$. This proves that the singularity ${\bf N}$ is always block  regularizable on orbifolds of revolution with constant Gaussian curvature.   
\end{proof}

\section*{Acknowledgments}
The author acknowledges with gratitude useful discussions pertinent to the present research with Florin Diacu, Ray McLenaghan, B. Doug Park, Ernesto Perez-Chavela, and Cristina Stoica. The research was supported in part by an NSERC Discovery grant.

\end{document}